\documentclass{article}
\usepackage{arxiv}

\usepackage{amsthm}
\usepackage{amsmath}
\usepackage{dsfont}
\usepackage{amssymb}
\usepackage{natbib}
\usepackage{float}
\usepackage[colorlinks,citecolor=blue,urlcolor=blue,filecolor=blue,backref=page]{hyperref}
\usepackage{graphicx}
\usepackage{enumitem}

\theoremstyle{plain}
\newtheorem{theorem}{Theorem}[section]
\newtheorem{lemma}[theorem]{Lemma}
\newtheorem{condition}[theorem]{Condition}
\theoremstyle{definition}

\newcommand{\indep}{\perp\!\!\!\!\perp}

\title{Bayesian optimal experimental design for inferring causal structure}

\author{
  Michele Zemplenyi \\
  Department of Biostatistics\\
  Harvard University\\
  \texttt{mzemplenyi@g.harvard.edu} \\
   \And
 Jeffrey W. Miller \\
  Department of Biostatistics\\
  Harvard University\\
  \texttt{jwmiller@hsph.harvard.edu} \\
}

\begin{document}
\maketitle

\begin{abstract}
Inferring the causal structure of a system typically requires interventional data, rather than just observational data.
Since interventional experiments can be costly, it is preferable to select interventions that yield the maximum amount of information about a system. We propose a novel Bayesian method for optimal experimental design by sequentially selecting interventions that minimize the expected posterior entropy as rapidly as possible. A key feature is that the method can be implemented by computing simple summaries of the current posterior, avoiding the computationally burdensome task of repeatedly performing posterior inference on hypothetical future datasets drawn from the posterior predictive. After deriving the method in a general setting, we apply it to the problem of inferring causal networks. We present a series of simulation studies in which we find that the proposed method performs favorably compared to existing alternative methods.  Finally, we apply the method to real and simulated data from a protein-signaling network.
\end{abstract}

\keywords{optimal experimental design \and active learning \and graphical models}









\section{Introduction}\label{introduction}
Inferring the causal structure of a set of related variables is key to understanding how a system works. 
By interpreting directed edges as implying causal relationships, a causal network model extends standard (non-causal) graphical models by specifying the distribution of the data when one or more variables are manipulated \citep{Pearl}. A large body of work exists on learning the structure of graphical models from observational data, that is, data passively collected without any experimental intervention performed on the system under study; see \cite{Daly2011} for a comprehensive review.  However, typically, observational data alone can only reveal the structure of a graphical model up to the Markov equivalence class containing the true data-generating graph \citep{Verma91}. To fully determine the structure of a causal network without making additional assumptions about specific functional model classes and error distributions, interventional data are needed to resolve the directionality of uncompelled edges \citep{Peters2011}. Interventional data result from experiments in which one or more nodes have been actively manipulated, for example, by activating or inhibiting the expression of a gene in a model organism \citep{Sachs05,Nagarajan2013}.

Crucially, different intervention experiments yield different amounts of information about the causal structure. Thus, since experiments are often expensive and time-consuming, it is advantageous to select interventions that provide the maximum amount of information. Optimal experimental design (OED) methods, also referred to as active learning algorithms, attempt to optimize this experiment selection process by providing a means of evaluating which experiments should be performed next given the current state of knowledge. From the Bayesian perspective, a naive approach would be as follows: for each candidate experiment, generate hypothetical datasets from the posterior predictive, perform posterior inference on each dataset, and compute a functional of the posterior that summarizes the amount of information gained. 
Averaging over many datasets would yield an estimate of the posterior expected amount of information gain for each candidate experiment.  
However, this naive approach would involve an inordinate amount of computation.

In this article, we develop a novel Bayesian OED technique that is principled and computationally tractable.
Roughly speaking, we consider the asymptotic information gain that each experiment would yield in the limit of infinitely many replicates,
as a proxy for the expected gain from finitely many replicates.
Under fairly general conditions, in this limit, the posterior is simply obtained by restricting the current posterior to a subset of the parameter space.
Thus, it turns out the reduction in entropy can be easily computed using samples from the current posterior, without generating or performing inference on any hypothetical datasets.  This leads to a vast reduction in the computation burden required to select experiments.

Based on this principle, we introduce a class of entropy-based criteria for determining the optimal intervention to perform in the next experiment. 
After the selected experiment is performed and new experimental data is obtained, we update the posterior on graphs and use it to select the next experiment. To sample from the posterior distribution over graphs, we employ an existing Markov chain Monte Carlo (MCMC) algorithm with efficient dynamic programming-based proposals \citep{Eaton07_hybridMCMC}. By iterating between experimentation and analysis in this cyclical fashion, we focus the data collection efforts in a way that reduces posterior uncertainty as rapidly as possible. We compare our method to two other active learning approaches and a random intervention approach in the context of several simulated data sets as well as the Sachs cell-signaling network, a commonly studied benchmark in the causal network literature \citep{Sachs05, Eaton07_DP, Cho, Ness}.


The article is organized as follows. In Section \ref{sec:general-criterion}, we derive our general criterion for selecting optimal experiments. In Section \ref{sec:criterion-causal-network-models}, we apply our general criterion to causal network models. In Section \ref{sec:practical-implementation}, we lay out the overall proposed framework, along with implementation details about the entropy-based criteria and the MCMC algorithm. Section \ref{sec:previous-work} discusses related previous work on OED and active learning methods. In Section \ref{sec:simulation-results}, we present a collection of simulation studies. Section \ref{sec:application} contains an application to the Sachs network, using both real experimental data and simulated data. We conclude with a brief discussion of our findings and directions for further research.

\section{General criterion}\label{sec:general-criterion}
\subsection{Intuition}\label{subsec:intuition}
Before discussing OED in the context of causal networks, we first consider the more general case of identifying an object of interest from a large set of possible objects by asking a sequence of questions. For intuition, we illustrate the basic idea of our method in terms of the popular game Twenty Questions. In this game one person, the ``answerer," thinks of an object. The other player, the ``questioner," then asks a sequence of ``yes" or ``no" questions with the goal of guessing the answerer's object using fewer than twenty questions. 

At the beginning of the game, the questioner has a prior over objects, representing the probability that the answerer has selected a given object. A question such as, ``Is the object living?" partitions the objects into two parts: living and non-living. The subsequent answer provides information that allows the questioner to eliminate the objects in one of the parts and update their posterior beliefs accordingly. If the prior is uniform, then the most efficient strategy is to select questions that partition the set of remaining objects roughly in half (Figure \ref{fig:partitionGeneral}). More generally, if the prior is not uniform, then it is most efficient to split the posterior probability roughly in half at each step.

\begin{figure}
    \centering
    \includegraphics[width=0.6\textwidth]{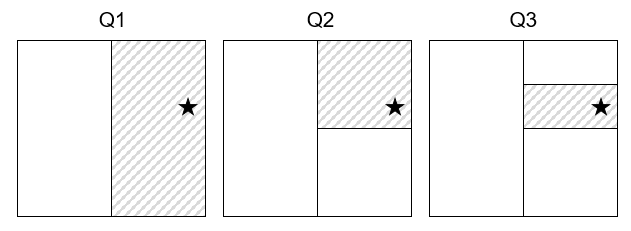}
    \caption{Schematic illustration of sequential partitioning by a series of questions, Q1-Q3. The answer to each question indicates the part (shaded region) containing the object of interest (denoted by a star.)}
    \label{fig:partitionGeneral}
\end{figure}

The following observations about Twenty Questions will be helpful to consider when we discuss the causal network setting in Section \ref{sec:criterion-causal-network-models}.  First, there are many possible ways of partitioning a space into two parts of equal size (or equal posterior probability, more generally). Different questions partition the space along different features of the objects. Second, we might relax the restriction to yes/no questions and also allow questions such as, ``Is the object a vegetable, animal, or mineral?" Such questions partition the space of objects into more than two parts. In choosing what question to ask next, the questioner is implicitly considering the informativeness of the partition induced by their question.

\subsection{General criterion}
\label{subsec:general-criterion}

We now formalize the ideas laid out in Section \ref{subsec:intuition}. Suppose $(\theta,\nu)\sim\pi$, $X_1,\ldots,X_N|\theta,\nu\sim P_{\theta,\nu}$ i.i.d., and $f(\theta)$ is a function of $\theta$ with the following three properties.
\begin{condition}\label{condition:general} ~
\begin{enumerate}[label=(\alph*), ref=\ref{condition:general}(\alph*)]
    \item\label{condition:cond-indep} $\theta \indep X_{1:N} \mid f(\theta)$.
    \item\label{condition:identifiable} $f(\theta)$ is identifiable, in the sense that there is a function $g$ such that $g(P_{\theta,\nu}) = f(\theta)$ almost surely, and
    \item\label{condition:finite} $f(\theta)$ can only take one of finitely many values.
\end{enumerate} 
\end{condition}
For interpretation, in the Twenty Questions illustration, $\theta$ represents the object selected by the answerer, $\pi$ is the prior distribution, $f(\theta)$ is the true answer to question $f$, and $X_1,\ldots,X_N$ represent noisy answers to question $f$. Note that $X_1,\ldots,X_N$ all pertain to the same question about $\theta$, not $N$ different questions.
Condition \ref{condition:cond-indep} means that once we know the true answer to the question, the noisy answers provide no additional information about $\theta$.
Condition \ref{condition:identifiable} means that the answer $f(\theta)$ is uniquely determined by the distribution of $X_n$, and thus, in the limit as $N\to\infty$, $f(\theta)$ can be recovered from $X_1,\ldots,X_N$.

In an experimentation context, $\theta$ is a parameter of interest, $\nu$ is a nuisance parameter, $\pi$ is the prior, $f(\theta)$ is the answer to a research question $f$ (for example, is a certain hypothesis true),
and $X_1,\ldots,X_N$ are data from $N$ replicates of an experiment performed to obtain information about $f(\theta)$.  
The nuisance parameter $\nu$ is sometimes needed for the i.i.d.\ assumption to hold.
For the causal network models that we consider in subsequent sections, $\theta$ is a directed acyclic graph and $f$ corresponds to an equivalence class of graphs.
First, however, we consider a general model and any $f$ that satisfies Condition \ref{condition:general}.

\subsubsection{Approximate information gain}\label{subsec:approx-info-gain}

The entropy of a random variable $Y$ is defined as $H(Y) := -\int p(y)\log p(y) d\mu(y)$ where $p(y)$ is the density of $Y$ with respect to some dominating measure $\mu$, or more succinctly, $H(Y) = - \mathrm{E}(\log p(Y))$.
Similarly, $H(Y|Z) = -\mathrm{E}(\log p(Y|Z))$, where the expectation is over the joint distribution of $Y$ and $Z$;
thus, unlike the conditional expectation $\mathrm{E}(Y|Z)$, the conditional entropy $H(Y|Z)$ is not a random variable.

By standard properties of entropy, since $f(\theta)$ is a function of $\theta$, the posterior entropy is
$$ H(\theta\mid X_{1:N}) = H(\theta \mid f(\theta), X_{1:N}) + H(f(\theta) \mid X_{1:N}). $$
By Condition \ref{condition:cond-indep}, $H(\theta \mid f(\theta), X_{1:N}) = H(\theta \mid f(\theta))$.
Further, $H(\theta \mid f(\theta)) = H(\theta) - H(f(\theta))$ again using that $f(\theta)$ is a function of $\theta$.
By Conditions~\ref{condition:identifiable} and \ref{condition:finite}, we have $H(f(\theta)\mid X_{1:N})\to 0$ as $N\to\infty$, because the posterior on $f(\theta)$ is guaranteed to concentrate at a single value \citep{Doob_1949,miller2018detailed}; see Lemma~\ref{lemma:doob-entropy} for details.
Thus, we have the following result.

\begin{theorem}
\label{theorem:general}
If $(\theta,\nu)\sim\pi$, $X_1,\ldots,X_N|\theta,\nu\sim P_{\theta,\nu}$ i.i.d., and $f(\theta)$
satisfies Condition~\ref{condition:general}, then
\begin{equation}
    \label{eqn:entropy-approx}
H(\theta\mid X_{1:N}) \xrightarrow[N\to\infty]{} H(\theta) - H(f(\theta)).
\end{equation}
\end{theorem}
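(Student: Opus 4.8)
The plan is to prove an \emph{exact} identity valid for every finite $N$,
\[
H(\theta\mid X_{1:N}) = H(\theta) - H(f(\theta)) + H(f(\theta)\mid X_{1:N}),
\]
and then show that the last term tends to $0$ as $N\to\infty$. This is essentially the calculation already sketched above the theorem statement, so the work is in justifying the entropy manipulations and then citing the concentration result.

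First I would record two applications of the chain rule for entropy. Since $f(\theta)$ is a deterministic function of $\theta$, the pair $(\theta,f(\theta))$ carries exactly the same information as $\theta$, so $H(\theta,f(\theta))=H(\theta)$ and $H(\theta,f(\theta)\mid X_{1:N})=H(\theta\mid X_{1:N})$; here the joint density of $(\theta,f(\theta))$ is taken with respect to $\mu_\theta$ times counting measure on the range of $f$, which is a legitimate dominating measure precisely because $f$ has finite range (Condition~\ref{condition:finite}). The chain rule then gives $H(\theta\mid X_{1:N}) = H(f(\theta)\mid X_{1:N}) + H(\theta\mid f(\theta),X_{1:N})$ and $H(\theta) = H(f(\theta)) + H(\theta\mid f(\theta))$.

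Next I would invoke Condition~\ref{condition:cond-indep}, namely $\theta\indep X_{1:N}\mid f(\theta)$, which gives $H(\theta\mid f(\theta),X_{1:N}) = H(\theta\mid f(\theta))$. Substituting this into the first chain-rule identity and eliminating $H(\theta\mid f(\theta))$ using the second yields the displayed exact identity. One should check that the quantities are well defined before rearranging: if $\theta$ is continuous these are differential entropies, and since conditioning does not increase differential entropy, $H(\theta\mid X_{1:N})$ and $H(\theta\mid f(\theta))$ are finite whenever $H(\theta)$ is, so the identity is meaningful (and in the degenerate case $H(\theta)=\infty$ it continues to hold in the extended-real sense).

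Finally, for the remaining term $H(f(\theta)\mid X_{1:N})$: by Conditions~\ref{condition:identifiable} and~\ref{condition:finite}, Doob's consistency theorem implies that the posterior on $f(\theta)$ concentrates on the true value as $N\to\infty$, $\pi$-almost surely; since $f(\theta)$ takes only finitely many values, the posterior entropy given the data is bounded above by $\log|\mathrm{range}(f)|$, so bounded convergence gives $H(f(\theta)\mid X_{1:N})\to 0$. This is exactly the content of Lemma~\ref{lemma:doob-entropy}, which I would cite rather than reprove. Accordingly, the only real obstacle is measure-theoretic bookkeeping in the continuous/discrete mixed setting—applying the chain rule with the correct dominating measures and confirming finiteness before rearranging—while the substantive probabilistic step, posterior concentration of $f(\theta)$, is precisely what Lemma~\ref{lemma:doob-entropy} packages.
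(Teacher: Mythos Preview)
Your proposal is correct and follows essentially the same route as the paper: chain rule to split $H(\theta\mid X_{1:N})$ into $H(f(\theta)\mid X_{1:N})+H(\theta\mid f(\theta),X_{1:N})$, use Condition~\ref{condition:cond-indep} to drop $X_{1:N}$ from the second term, rewrite $H(\theta\mid f(\theta))=H(\theta)-H(f(\theta))$, and invoke Lemma~\ref{lemma:doob-entropy} for the vanishing of $H(f(\theta)\mid X_{1:N})$. Your added remarks on dominating measures and finiteness are more careful than the paper's one-line appeal to ``standard properties of entropy,'' but the substance is identical.
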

In other words, when $N$ is sufficiently large, the difference between the prior entropy $H(\theta)$ and the posterior entropy $H(\theta|X_{1:N})$ is approximately equal to $H(f(\theta))$.
Put another way, the information gained---in terms of the reduction in entropy---is approximately equal to the entropy of the answer $f(\theta)$ \textit{under the prior}.
Thus, to estimate the information to be gained by a particular question $f$, we need only work with the prior --- not the posterior $\theta|X_{1:N}$ for a yet unobserved dataset $X_{1:N}$.

\subsubsection{Selection of experiments using approximate information gain}
\label{sec:selection-of-experiments}

To apply this result to select experiments, suppose that instead of $\pi(\theta,\nu)$ being the prior, $\pi(\theta,\nu)$ is the current posterior given all the data from any previous experiments. Let $\mathcal{E}$ denote a set of possible experiments. 
For each experiment $e\in\mathcal{E}$, let $X_1^e,\ldots,X_N^e|\theta,\nu \sim P_{\theta,\nu}^e$ i.i.d.\ be hypothetical random data from $N$ replicates of experiment $e$.
Suppose $f_e(\theta)$ satisfies Condition~\ref{condition:general} above.

Then $p(\theta \mid X_{1:N}^e) \propto p(X_{1:N}^e \mid \theta)\,\pi(\theta)$ is the posterior distribution of $\theta$ given the new data $X_{1:N}^e$ as well as data from any previous experiments. The expected posterior entropy of $\theta$ after experiment $e$ is then 
\begin{align}
\label{eqn:expected-posterior-entropy}
    H(\theta \mid X_{1:N}^e) = \int H(\theta \mid X_{1:N}^e = x_{1:N}^e)\, p(x_{1:N}^e)\, d x_{1:N}^e
\end{align}
where $p(x_{1:N}^e)$ is the posterior predictive distribution for experiment $e$ given the data from previous experiments.
We would like to choose $e$ to minimize the expected posterior entropy $H(\theta \mid X_{1:N}^e)$.  However, approximating $H(\theta \mid X_{1:N}^e)$ via Monte Carlo is computationally intensive, since for every $e$, it would typically involve 
(i) simulating $T$ hypothetical datasets $x_{1:N}^{e,1},\ldots,x_{1:N}^{e,T}$ from the posterior predictive $p(x_{1:N}^e)$,
(ii) approximating the resulting new posteriors, $p(\theta \mid X_{1:N}^e = x_{1:N}^{e,t})$ for $t = 1,\ldots,T$, for example, by running $T$ MCMC chains,
and (iii) approximating the entropy $H(\theta \mid X_{1:N}^e = x_{1:N}^{e,t})$ for $t = 1,\ldots,T$,
in order to form a Monte Carlo approximation $H(\theta \mid X_{1:N}^e)  \approx \frac{1}{T}\sum_{t=1}^T \widehat{H}(\theta \mid X_{1:N}^e = x_{1:N}^{e,t})$ using Equation~\ref{eqn:expected-posterior-entropy}.

In contrast, the computation is vastly simplified using the approximation in Equation \ref{eqn:entropy-approx}.  First, note that by Equation \ref{eqn:entropy-approx}, 
\begin{align}
    \label{eqn:expected-entropy-approx}
    H(\theta \mid X_{1:N}^e) &\approx H(\theta) - H(f_e(\theta)). 
\end{align}
Since $H(\theta)$ does not depend on $e$, this implies that minimizing $H(\theta \mid X_{1:N}^e)$ is approximately equivalent to maximizing $H(f_e(\theta))$. Further, since $H(f_e(\theta))$ depends only on $\pi$ and $f_e$ (and not on $P_{\theta,\nu}^e$ or $X_{1:N}^e)$,
it is often relatively easy to approximate $H(f_e(\theta))$ using posterior samples $\theta_1,\ldots,\theta_T$. Specifically, we can generate a single set of samples $\theta_1,\ldots,\theta_T$ from the current posterior $\pi$, and then for each potential experiment $e\in\mathcal{E}$, compute
\begin{align} \label{eqn:approx-entropy-over-partition}
    H(f_e(\theta)) &=  -\sum_{y} p(f_e(\theta) = y) \log p(f_e(\theta) = y) \approx  -\sum_{y} \hat{p}_e(y) \log \hat{p}_e(y)
\end{align}
where $\hat{p}_e(y) :=  \frac{1}{T}\sum_{t=1}^{T}\mathds{1}\big(f_e(\theta_t) = y\big)$
and $\mathds{1}(\cdot)$ is the indicator function. 
In Equation~\ref{eqn:approx-entropy-over-partition}, the sum is over all values $y$ in the range of $f_e$. 
Thus, our proposed method of choosing $e$ is as follows.
\begin{enumerate}
    \item Generate samples $\theta_1,\ldots,\theta_T$ from the current posterior $\pi$.
    \item Compute $\hat{p}_e(y) =  \frac{1}{T}\sum_{t=1}^{T}\mathds{1}\big(f_e(\theta_t) = y\big)$ for each candidate experiment $e$.
    \item Select the experiment $e$ with the largest value of $\hat{h}_e := -\sum_{y} \hat{p}_e(y) \log \hat{p}_e(y)$. 
\end{enumerate}
Note that, equivalently, $\hat{h}_e = -\sum_{A\in\mathcal{A}_e} \hat{\pi}(A) \log \hat{\pi}(A)$ where $\hat\pi = \frac{1}{T}\sum_{t=1}^T \delta_{\theta_t}$ and $\mathcal{A}_e$
is the partition of $\theta$-space induced by $f_e$, since $\hat{p}_e(y) = \hat{\pi}(A)$ where $A = \{\theta : f_e(\theta) = y\}$. Therefore, we can interpret $\hat{h}_e$ as an approximation to the current posterior entropy of the partition induced by $f_e$.

It is important to note that although our criterion is motivated by the asymptotics as the number of replicates goes to $\infty$, it accounts for finite sample uncertainty due to the fact that the posterior $\pi$ quantifies our uncertainty in $\theta$ based on finitely many previous experiments and finitely many replicates of each previous experiment.
Also, a further advantage of our approach is that $f(\theta)$ often takes a small number of values, such as two for a binary function, and thus, $H(f(\theta))$ is often much easier to estimate from samples than $H(\theta \mid X_{1:N}^e)$ or even $H(\theta)$.


\section{Criterion for causal network models}\label{sec:criterion-causal-network-models}
In this section, we apply our general criterion to the setting of causal network models.  First, we define the model we will use, and provide some intuition for partitions of graph space that are informed by interventional experiments.

\subsection{Causal network models}\label{subsec:causal-network-models}
We use the standard causal network model specification, which we review here. Note that it is common to refer to these models as ``Bayesian networks'' \citep{Pearl}, but we avoid this term because the Bayesian aspect of our methodology comes from the use of posterior distributions to quantify uncertainty, rather than from features inherent to the model itself.

All graphical models use nodes to represent random variables and edges to represent the probabilistic relationships among nodes. 
In contrast to traditional graphical models, which only specify the joint distribution in the observational setting, a causal network model also specifies the joint distribution when one or more nodes are manipulated.  
To represent this causal structure, it is standard to use a directed acyclic graph (DAG) along with the conditional probability distribution (CPD) of each node given the values of its parent nodes. The directed edges in this structure represent cause and effect relationships between parent and child nodes. We refer to a DAG topology along with all the CPDs as a \textit{causal network model}.


In a causal network model, the graph topology and the CPDs can be viewed as specifying an algorithm for generating data under manipulation of the nodes. Here, we assume interventions that assign some subset of nodes to values that may be fixed or random, but are independent of all other nodes. Thus, when intervening on node $i$, we effectively sever all incoming edges to node $i$. 
The data generating process under such an intervention can be described as follows: each manipulated node is set to its assigned value and each non-manipulated node is drawn from its CPD, proceeding in an ordering of the nodes that ensures parents are drawn before their children. In the observational setting (that is, when no nodes are manipulated), this reduces to the usual graphical model specification; that is, the joint distribution of the nodes $\mathcal{V} = (X_1, \dots, X_V)$ factors as
$p(X_1, \dots, X_V \mid \beta,\,G) = \prod_{i=1}^{V}p(X_i\mid X_{\mathrm{pa}(i)}, \beta_i,\,G)$
where $G$ is the graph topology and $\beta_i$ contains the parameters of the CPD for $X_i$.
However, the algorithmic nature of the causal network also specifies that, when some subset of nodes $S$ is independently manipulated, the joint distribution is
$p^*(X_1, \dots, X_V \mid \beta,\, G) = \prod_{i=1}^V p(X_i\mid X_{\mathrm{pa}(i)}, \beta_i,\, G)^{\mathds{1}(i\not\in S)} p^*(X_i\mid\beta_i^*)^{\mathds{1}(i\in S)}$
where $p^*(X_i\mid\beta_i^*)$ is the distribution of $X_i$ when intervening on node $i$.
Here, we define $\beta := (\beta_1,\ldots,\beta_V,\beta_1^*,\ldots,\beta_V^*)$.

For a data set $D = \big((X_{1,n}, \dots, X_{V,n}) : n=1,\ldots,N\big)$ consisting of $N$ samples of the $V$ nodes under interventions on subsets $S_1,\ldots,S_N$, respectively, the marginal likelihood is
\begin{align}\label{eqn:general-marginal-likelihood}
    p^*(D|G) &= \int p^*(D|\beta,G)\,p(\beta|G)\, d\beta \\ 
    &= \int \Big(\prod_{n=1}^N p^*(X_{1,n},\ldots,X_{V,n} \mid \beta,\,G)\Big)\,p(\beta|G)\, d\beta \\
    &= p_O^*(D|G) p_S^*(D)
\end{align}
where 
\begin{align}
p_O^*(D|G) &= \prod_{i=1}^{V} \int \big(\prod_{n=1}^{N} p(X_{i,n} \mid X_{\mathrm{pa}(i),n},\, \beta_i, \, G)^{\mathds{1}(i \not\in S_n)} \big)\, p(\beta_i | G)\, d\beta_i \\
p_S^*(D) &= \prod_{i=1}^{V} \int \Big(\prod_{n=1}^{N} p^*(X_{i,n}\mid \beta_i^*)^{\mathds{1}(i \in S_n)} \Big)\, p(\beta_i^*)\, d\beta_i^*.
\end{align}
When $p(\beta_i | G)$ and $p(\beta_i^*)$ are conjugate priors, these integrals can be computed in closed form. Since $p_S^*(D)$ does not provide any information about $G$, it is often omitted, however, we include it for theoretical purposes.

In this paper, we use categorical CPDs with Dirichlet priors, and thus the marginal likelihood $p^*(D\mid G)$ can be computed in closed form. Specifically, we assume that the CPD of each node is
\begin{align}
    p(X_i = k \mid X_{\mathrm{pa}(i)} = j,\; \beta,\, G) = \beta_{i j k} 
\end{align}
for $i\in\{1,\ldots,V\}$, $j\in\{1,\ldots,q_i\}$, and $k\in\{1,\ldots,r_i\}$.
Here, $j$ enumerates the possible joint states of $X_{\mathrm{pa}(i)}$, 
and we abuse notation slightly by writing $X_{\mathrm{pa}(i)} = j$ to mean that $X_{\mathrm{pa}(i)}$ takes the $j$th possible state. We use the BDeu Dirichlet prior, $\beta_{i j} \sim \mathrm{Dirichlet}(\boldsymbol{\alpha}_{i j})$ with
$\alpha_{i j k} = 1/(r_i q_i)$,
following standard practice in the categorical setting \citep{Heckerman1995, CooperYoo1999, Eaton07_hybridMCMC}.
Similarly, for the interventions, we assume $p^*(X_i = k \mid \beta_i^*) = \beta_{i k}^*$ and 
for simplicity, $\beta_i^* \sim \mathrm{Dirichlet}(1/r_i,\ldots,1/r_i)$. 

The BDeu  prior has the favorable property of likelihood equivalence, which we use in Section \ref{sec:markov-tse-equivalent} \citep{Buntine1991,Heckerman1995}. For the Dirichlet-Categorical case with BDeu prior,
\begin{align} \label{eqn:marginal-likelihood}
   p_O^*(D \mid G) &=  \prod_{i=1}^{V} \prod_{j=1}^{q_i} \frac{\Gamma(\alpha_{i j})}{\Gamma(\alpha_{i j}+N_{i j})} \prod_{k=1}^{r_i} \frac{\Gamma(\alpha_{i j k}+N_{i j k})}{\Gamma(\alpha_{i j k})} 
\end{align}
where $\Gamma$ is the gamma function, $N_{i j k} = \sum_{n=1}^N \mathds{1}(i\not\in S_n,\, X_{\mathrm{pa}(i)}=j,\, X_{i,n}=k)$ is the number of samples in which node $X_i$ is observed (not manipulated) to have state $k$ when its parents have state $j$, $N_{i j} = \sum_{k=1}^{r_i} N_{i j k}$, and $\alpha_{i j} = \sum_{k=1}^{r_i} \alpha_{i j k}$.
The OED method we propose can be used with other CPDs as well, so long as the marginal likelihood can be computed or approximated.

\subsection{Intuition for partitions of graph space}
\label{sec:graph-motivation}

In the Twenty Questions example, we considered the partition induced over a set of objects by asking a question about the object of interest. In this section, to provide intuition for how this applies to the causal network setting, we illustrate examples of partitions that intervention experiments induce over a space of graphs. 
For expository purposes, suppose the true graph $G$ is known to be one of the four graphs shown in Figure~\ref{fig:perturb}; this example is inspired by an example from \cite{Pournara}. 
In general, the set of graphs under consideration, $\mathcal{G}$, would consist of all possible DAGs on nodes $\{A,B,C,D\}$ rather than just these four graphs.

First, consider what features of a graph a single node intervention on node $e$ could help reveal. For example, we can expect an intervention on $A$ to have observable downstream effects on at least some of the descendants of $A$, but it would not affect any ancestors of $A$.
Therefore, intervening on $A$ should give us information about which nodes are descendants of $A$.
Thus, one possible choice for $f_e(G)$ is the set of descendants of the manipulated node.
Since $f_e(G)$ induces a partition of the set of graphs, we refer to it as a partition scheme. 

To select which node to manipulate, we compare the information each candidate intervention is expected to yield
with respect to a given partition scheme.
Suppose $A$ is manipulated. Figure \ref{fig:perturb} shows the partition of graphs according to the descendants of $A$; that is, $G$ and $G'$ are in the same part if $f_e(G) = f_e(G')$. In the first three graphs, $A$ has no descendants, whereas in the last graph, $\{B, C, D\}$ are all descendants of $A$. Thus, as long as intervening on $A$ has an effect on one or more descendants, we could distinguish whether $G \in \{G_1, G_2, G_3\}$ or $G = G_4$ after sufficiently many replicates of the intervention on $A$.
Meanwhile, if node $C$ is manipulated, a different partition over $\mathcal{G}$ is induced since $C$ has a different pattern of descendant sets than $A$; see Figure \ref{fig:perturb}.  

In general, an intervention partitions the set of graphs into equivalence classes such that (i) the graphs in each class are indistinguishable with respect to this intervention (corresponding to Condition~\ref{condition:cond-indep}), and (ii) graphs in different classes are distinguishable (corresponding to Condition~\ref{condition:identifiable}).
For the Dirichlet-Categorical model that we use, the equivalence classes induced by the likelihood have an elegant graph-based characterization; see Section \ref{sec:markov-tse-equivalent}.  However, we have also found several other partition schemes to be useful in practice; see Section \ref{sec:diffPartitions}.

After an intervention is performed, the generated data provide evidence to suggest which parts of the partition are compatible with the experimental data --- specifically, parts that are more compatible with the data will have higher posterior mass. 
Roughly speaking, we would like to choose an experiment that narrows down the set of compatible graphs as much as possible.
For instance, in the toy example in Figure~\ref{fig:perturb}, intervening on $C$ is preferable to intervening on $A$, since $C$ induces a finer partition of $\mathcal{G}$. However, in general it is also important to consider the posterior probability of the graphs given the data from any previous experiments, since there is no point in finely partitioning regions of the space with very low probability.  To make this precise, we apply our general entropy-based criterion from Section \ref{sec:general-criterion} to the causal network setting, as described next.

\begin{figure}
\centering
        \includegraphics[scale = 0.60]{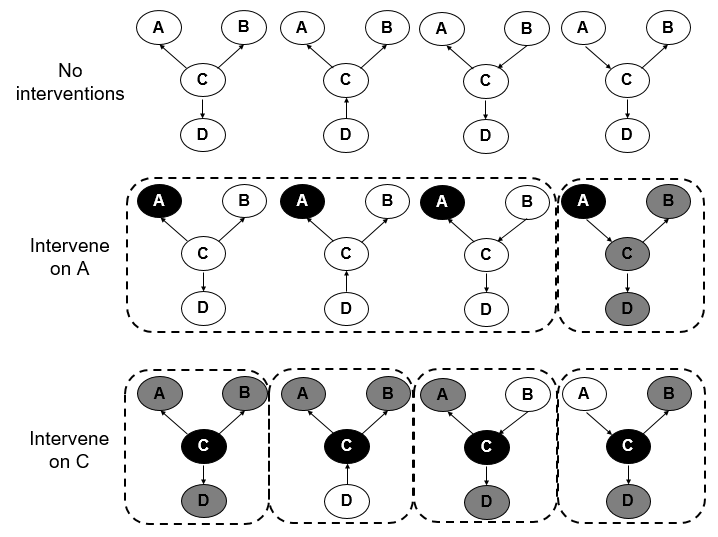}
    \caption{Top panel: Graphs in $\mathcal{G}$. Middle panel: Partition induced by intervening on $A$. The manipulated node is shown in black while descendants of the manipulated node are shaded grey. Bottom panel: Partition induced by intervening on $C$.}
    \label{fig:perturb}
\end{figure}

\subsection{Applying the experiment selection criterion to causal networks}
\label{sec:graph-criterion}

Specializing from the general setting of Section~\ref{sec:general-criterion} to the case of causal networks, we define the unknown parameter of interest to be $\theta := G$ and let $f_e(G)$ be a partition scheme.
Further, in the Dirichlet-Categorical case, we define $\nu := \beta$, that is, the nuisance parameter $\nu$ is the collection of CPD parameters $\beta$.  Our goal is to perform experiments that make the posterior on graphs concentrate at the true graph as quickly as possible. We quantify concentration using the entropy of the posterior on graphs, $H(G)$, where $G$ is distributed according to the posterior given all experiments so far. Thus, we wish to perform experiments that minimize $H(G)$.

In many cases, inferring the entire graph $G$ is overly ambitious. For instance, if the number of nodes is even moderately large, the number of possible graphs is extremely large, making it infeasible to infer $G$ completely.  However, often, one only needs to infer a specific feature of $G$, such as whether node $X_1$ is an ancestor of node $X_2$, or whether $X_3$ mediates the effect of $X_1$ on $X_2$. In such cases, one can instead define $\theta$ to be a function of $G$, say, $\theta:=\varphi(G)$ and focus on minimizing $H(\varphi(G))$ rather than $H(G)$.

To prioritize experiments, we apply the general criterion derived in Section \ref{sec:general-criterion}.
Specifically, given samples $G_1, \dots, G_T$ from the current posterior, we choose the next experiment $e$ to maximize 
\begin{align}\label{eqn:hhat}
    \hat{h}_e = -\sum_y \hat{p}_e(y) \log \hat{p}_e(y)
\end{align}
where $\hat{p}_e(y) = \frac{1}{T}\sum_{t = 1}^T \mathds{1}(f_e(G_t) = y)$.
Thus, we choose the intervention that maximizes the posterior entropy (under the current posterior) of the partition induced by $f_e$.
If Condition~\ref{condition:general} is satisfied, then this minimizes the approximate expected entropy of the new posterior given the additional data from the experiment.
Meanwhile, if Condition~\ref{condition:general} is not fully satisfied, then this approach is not guaranteed to reduce the entropy optimally, but is still a sensible way of choosing interventions that quickly reduce the entropy.

\section{Practical implementation of the method}\label{sec:practical-implementation}

In this section, we provide practical details on implementing the criterion in Section \ref{sec:graph-criterion},
including specifics regarding partition schemes, equivalence classes of graphs, sampling from the posterior on graphs,
and an overall algorithm.

\subsection{Partition schemes}
\label{sec:diffPartitions}

In the context of graphs, Condition~\ref{condition:cond-indep} is that given $f_e(G)$, the graph $G$ is conditionally independent of data from experiment $e$, and Condition~\ref{condition:identifiable} is that $f_e(G)$ is identifiable with respect to the distribution of the data under experiment $e$. Condition~\ref{condition:finite} is always satisfied since there are finitely many graphs $G$.

While in theory we use Condition~\ref{condition:general} to justify the method, in practice Conditions \ref{condition:cond-indep} and \ref{condition:identifiable} are not strictly necessary.
Recall that the optimality theory is based on the expected information gain from asymptotically many replicates of the next selected experiment.
Thus, in practice, it may be possible to obtain excellent performance using a partition scheme that violates Condition \ref{condition:cond-indep} or \ref{condition:identifiable}.
Consequently, we define a variety of partition schemes here, and we empirically compare their performance in Section~\ref{sec:simulations}.

We consider experiments $e$ that intervene on a single node, and for brevity we use $e$ to denote the manipulated node.  Consider the following partition schemes.
\begin{enumerate}
    \item Markov equivalence class (MEC): $f_e(G)$ equals the Markov equivalence class of graphs when intervening on node $e\in\mathcal{V}$; see Section \ref{sec:markov-tse-equivalent}.
    \item Child Set (CS): $f_e(G)$ equals the set of children of node $e\in\mathcal{V}$.
    \item Descendant Set (DS): $f_e(G)$ equals the set of descendants of node $e\in\mathcal{V}$.
    \item Parent Set (PS): $f_e(G)$ equals the set of parents of node $e\in\mathcal{V}$.
\end{enumerate}
We also consider the following slightly different approach.
\begin{enumerate}\setcounter{enumi}{4}
    \item Pairwise Child (PWC): Maximize $\sum_{v\in\mathcal{V}} H(f_{e,v}(G))$ where $f_{e,v}(G) = \mathds{1}((e,v)\in G)$ is the indicator of whether $G$ has an edge from $e$ to $v$.
\end{enumerate}

\subsection{Markov equivalence classes}
\label{sec:markov-tse-equivalent}
Two DAGs are said to be \textit{Markov equivalent} if they represent the same set of conditional independence relations.  While all of the conditional independence relationships entailed by a graph can be computed using the d-separation algorithm \citep{Pearl88}, the following elegant result provides a simpler way to determine whether two DAGs are Markov equivalent based on their topology.

\begin{theorem}[\cite{Verma91}] \label{theorem-markov-equiv}
Two DAGs $G_1$ and $G_2$ are Markov equivalent if and only if they have the same skeleton and the same v-stuctures. 
\end{theorem}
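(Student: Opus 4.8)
The plan is to establish both directions directly in terms of d-separation, taking the d-separation algorithm \citep{Pearl88} as the operational definition of the conditional independence relations entailed by a DAG.

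\textbf{Only if.} Suppose $G_1$ and $G_2$ are Markov equivalent; I argue by contraposition in two steps. If the skeletons differ, say $\{a,b\}$ is an edge of $G_1$ but not of $G_2$, then in $G_1$ the single edge $a-b$ is a trail with no interior vertices, hence active under every conditioning set, so $a$ and $b$ are not d-separated in $G_1$ by any set; whereas, since $a$ and $b$ are non-adjacent in $G_2$, choosing a topological order of $G_2$ in which (without loss of generality) $a$ precedes $b$, the local Markov property for DAGs --- a vertex is d-separated from its non-descendants given its parents --- shows $a$ and $b$ are d-separated given $\mathrm{pa}_{G_2}(b)$. This independence holds in $G_2$ but not in $G_1$, contradicting Markov equivalence; hence the skeletons agree. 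Next, with equal skeletons, suppose $a\to c\leftarrow b$ is a v-structure of $G_1$ (so $a,b$ are non-adjacent) that is not a v-structure of $G_2$, so that on the path $a-c-b$ in $G_2$ the vertex $c$ is a non-collider. Take a topological order of $G_1$ with $a$ before $b$ and put $S^\ast=\mathrm{pa}_{G_1}(b)$; since $b\to c$ in $G_1$ we have $c\notin S^\ast$, and the local Markov property gives that $a$ and $b$ are d-separated given $S^\ast$ in $G_1$. But in $G_2$ the path $a-c-b$ has its sole interior vertex $c$ a non-collider outside $S^\ast$, hence is active given $S^\ast$, so $a$ and $b$ are \emph{not} d-separated given $S^\ast$ in $G_2$ --- a contradiction. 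Thus the v-structures agree.

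\textbf{If.} Suppose $G_1$ and $G_2$ have the same skeleton and v-structures; I must show they entail the same d-separations. The route I would take is via active trails: it suffices to show that for disjoint $X,Y,S$, if some trail from $x\in X$ to $y\in Y$ is active given $S$ in $G_1$, then one is active given $S$ in $G_2$ (the reverse being symmetric). Fix a \emph{shortest} such trail $\tau$ in $G_1$. First, on a shortest active trail no collider is shielded: if a collider $v$ had adjacent trail-neighbors $a,b$, splicing $v$ out along the edge $a-b$ would yield a shorter trail, still active, since the vertex (namely $v$, or a descendant of $v$ lying in $S$) that activated $v$ is a descendant of both $a$ and $b$ (as $a\to v$ and $b\to v$), hence of whichever of $a,b$ becomes a collider on the spliced trail. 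Hence every collider of $\tau$ is an unshielded v-structure of $G_1$, so also of $G_2$, and remains a collider there; and every non-collider of $\tau$ with non-adjacent trail-neighbors keeps that status in $G_2$ too, else it would be a v-structure of $G_2$ but not of $G_1$. So $\tau$ transfers verbatim to $G_2$ except at two local configurations: a non-collider of $G_1$ whose trail-neighbors happen to be adjacent (and can become a \emph{shielded} collider in $G_2$), and an unshielded collider of $\tau$ whose activating descendant in $S$ is not a descendant in $G_2$. Each is handled by a local surgery --- rerouting through the shielding edge in the first case and through the activating vertex in the second --- after which the argument is reapplied to the resulting trail. Equivalently, one may run the whole argument through the moralization characterization of d-separation, using that the moral graph of a DAG is its skeleton with one added edge per unshielded collider's pair of parents, hence is determined by the skeleton and v-structures, and is inherited by induced subgraphs.

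The forward direction is routine; the substance, and the main obstacle, is the transfer step in the ``if'' direction. A naive vertex-by-vertex matching of a single trail does not work: a trail active in $G_1$ may fail to be active in $G_2$ even when skeletons and v-structures coincide, because a collider activated by a descendant in $G_1$ need not be activated in $G_2$, and --- in the moral-graph formulation --- the smallest ancestral sets of $X\cup Y\cup S$ in $G_1$ and in $G_2$ need not coincide. Turning the local surgeries (or the ancestral-set reconciliation) into a rigorous induction requires a careful case analysis on the orientations of the edges at the problematic vertex and on which of the relevant vertices lie in $S$; this is the classical technical core of the Verma--Pearl theorem, and I would spend most of the proof there.
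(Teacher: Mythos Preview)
The paper does not contain a proof of this theorem. Theorem~\ref{theorem-markov-equiv} is stated as a cited result of \citet{Verma91} and used as a tool in Section~\ref{sec:markov-tse-equivalent}; no argument for it appears anywhere in the paper or its appendix. So there is nothing to compare your proposal against.

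That said, your sketch is broadly in line with the standard Verma--Pearl proof. The ``only if'' direction is essentially complete; the one small gap is that when you write ``take a topological order of $G_1$ with $a$ before $b$'' you should note explicitly that, since $a$ and $b$ are non-adjacent in an acyclic graph, at least one of them is not a descendant of the other, so up to swapping their roles such an order exists. For the ``if'' direction you correctly identify the real difficulty: an active trail in $G_1$ need not be active in $G_2$, because the collider-activation condition depends on descendant relations, which are \emph{not} determined by skeleton and v-structures alone. Your two ``local surgeries'' are the right moves, but as you acknowledge, making the induction go through is the substantive work and is only outlined here. The moralization route you mention as an alternative is not quite as clean as you suggest: the moral graph of the ancestral set $\mathrm{An}(X\cup Y\cup S)$ is what governs d-separation, and since ancestral sets differ between $G_1$ and $G_2$, equal skeleton and v-structures do not immediately give equal moralized ancestral subgraphs --- so that route also needs the kind of reconciliation step you flag at the end.
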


The \textit{skeleton} of a graph is its topology ignoring edge directions. A \textit{v-structure} is a triple of nodes $(x,y,z)$ with topology $x \rightarrow y \leftarrow z$, where there is no edge connecting $x$ and $z$. 
In general, observational data alone cannot distinguish between Markov equivalent graphs unless one assumes specific error distributions or functional model classes \citep{Peters2011}. A rich literature exists on Markov equivalence; see, for example, \cite{Andersson1997} and \cite{Chickering:UAI96}.

While Markov equivalent graphs represent the same conditional independence relationships, they differ in the causal relationships they encode since it is the direction of arrows, not just the skeleton, that is important for causal interpretation. Interventions can help distinguish among graphs in the same Markov equivalence class. However, even after an intervention is performed, some graphs may still be indistinguishable. \cite{Hauser2012} consider performing a sequence of interventions, and they provide a generalization of Theorem~\ref{theorem-markov-equiv} that characterizes the equivalence classes of graphs that are indistinguishable with respect to the whole sequence of interventions.

For our approach, however, we only need to consider the partition induced by a single candidate intervention (rather than the whole sequence of interventions), since the information from previous interventions is already represented in the posterior distribution.  Thus, for our approach, a natural choice of partition scheme is to define $f_e(G)$ to be the Markov equivalence class of $G^e$, where $G^e$ is the DAG obtained from $G$ by removing all edges from $\mathrm{pa}(e)$ to $e$. This is referred to as the ``MEC'' scheme in Section~\ref{sec:diffPartitions}.

Following the notation of Sections~\ref{sec:general-criterion} and \ref{sec:criterion-causal-network-models},
we write $P_{G,\beta}$ for the distribution of $(X_1,\ldots,X_V)$ given graph $G$ and CPD parameters $\beta = (\beta_1,\ldots,\beta_V,\beta_1^*,\ldots,\beta_V^*)$.
Define $\beta^e$ to be a modified copy of $\beta$ in which $\beta_e^*$ takes the place of $\beta_e$ and $\beta_{e 1}$ takes the place of $\beta_e^*$.
Thus, when intervening on node $e$, the distribution can be written as $P_{G^e,\beta^e}$.

Consider a sequence of interventions in which a single node is manipulated at a time.
For instance, suppose we have performed $N_k$ replicates intervening on node $i_k$ for $k = 1,\ldots,K$,
and we are considering intervening on node $i'$ for the next set of $N'$ replicates.
The joint model is then
\begin{align} \label{eqn:sequential-model}
\begin{split}
    & (G,\beta) \sim \pi \\
    & X_1^e,\ldots,X_N^e \mid G,\beta \text{ i.i.d.} \sim P_{G^e,\beta^e} \text{ for $(e,N) \in \{(i_1,N_1),\ldots,(i_K,N_K),(i',N')\}$}
\end{split}
\end{align}
where $P_{G,\beta}$ is the categorical model, $\pi(\beta|G)$ is the BDeu-based prior defined in Section~\ref{subsec:causal-network-models}, and $\pi(G)$ is an arbitrary prior on DAGs.
\begin{theorem}\label{theorem:cond-indep}
Under the joint model in Equation~\ref{eqn:sequential-model}, if $f_e(G)$ is the Markov equivalence class of $G^e$ then
$$ X_{1:N'}^{i'} \indep G \mid f_{i'}(G),f_{i_1}(G),\ldots,f_{i_K}(G). $$
\end{theorem}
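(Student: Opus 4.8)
The plan is to show that, in the joint model of Equation~\ref{eqn:sequential-model}, the conditional density $p(X_{1:N'}^{i'}\mid G)$ --- obtained by marginalizing out $\beta$ and all data from the earlier experiments --- depends on $G$ \emph{only} through the Markov equivalence class of $G^{i'}$, i.e.\ only through $f_{i'}(G)$. Since $f_{i'}(G)$ is one of the conditioning variables and the remaining conditioning variables $f_{i_1}(G),\ldots,f_{i_K}(G)$ are themselves functions of $G$, this immediately gives $X_{1:N'}^{i'}\indep G\mid f_{i'}(G),f_{i_1}(G),\ldots,f_{i_K}(G)$, because adding further functions of $G$ to the conditioning set never destroys a conditional independence of the form $X\indep G\mid h(G)$.

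The first step is to discard the earlier data. Conditional on $(G,\beta)$ the data blocks $X_{1:N_1}^{i_1},\ldots,X_{1:N_K}^{i_K},X_{1:N'}^{i'}$ are mutually independent, each being an i.i.d.\ sample from its own law $P_{G^e,\beta^e}$; hence $X_{1:N'}^{i'}\indep (X_{1:N_k}^{i_k})_{k=1}^K\mid (G,\beta)$ and therefore $p(X_{1:N'}^{i'}\mid G)=\int p(X_{1:N'}^{i'}\mid G,\beta)\,p(\beta\mid G)\,d\beta$, with the earlier blocks dropping out entirely. It thus suffices to analyze this single integral.

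The second step is to evaluate the integral in closed form, using the factorization of $P_{G^{i'},\beta^{i'}}$ over nodes together with Dirichlet--Categorical conjugacy, while carefully tracking the bookkeeping in the definition of $\beta^{i'}$. Write $D':=X_{1:N'}^{i'}$. In $G^{i'}$ the node $i'$ has no parents and its CPD is governed by $\beta^{i'}_{i'}=\beta_{i'}^{*}\sim\mathrm{Dirichlet}(1/r_{i'},\ldots,1/r_{i'})$, while for every $v\neq i'$ we have $\mathrm{pa}_{G^{i'}}(v)=\mathrm{pa}_G(v)$ and $\beta^{i'}_v=\beta_v$, whose prior is the BDeu Dirichlet determined by $\mathrm{pa}_G(v)$. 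Carrying out the integral node by node, the factors with $v\neq i'$ are exactly the per-node terms of the BDeu marginal likelihood in Equation~\ref{eqn:marginal-likelihood} evaluated with parent set $\mathrm{pa}_G(v)$, while the $i'$ factor equals the interventional term $p_S^{*}(D')$, which coincides with the BDeu per-node term for a parentless node (note $\mathrm{Dirichlet}(1/r_{i'},\ldots,1/r_{i'})$ is precisely the BDeu prior for $q_{i'}=1$, since then $\alpha_{i'1k}=1/(r_{i'}q_{i'})=1/r_{i'}$, with the intervened values of $i'$ playing the role of the observed values). Hence $p(X_{1:N'}^{i'}\mid G)=p_O^{*}(D'\mid G^{i'})$, the observational BDeu marginal likelihood of $D'$ --- viewed as a complete dataset over all $V$ nodes --- scored against the DAG $G^{i'}$.

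The final step is to invoke the likelihood-equivalence property of the BDeu prior cited in Section~\ref{sec:markov-tse-equivalent} \citep{Buntine1991,Heckerman1995}: for any complete dataset, Markov equivalent DAGs have identical observational marginal likelihoods. Applied to the DAG $G^{i'}$, this shows $p_O^{*}(D'\mid G^{i'})$, and hence $p(X_{1:N'}^{i'}\mid G)$, depends on $G$ only through the Markov equivalence class of $G^{i'}$, which by definition is $f_{i'}(G)$; the conclusion follows. The step requiring the most care is the identification $p(X_{1:N'}^{i'}\mid G)=p_O^{*}(D'\mid G^{i'})$: one must match the modified-parameter bookkeeping of $\beta^{i'}$ to the BDeu hyperparameters of the re-rooted node $i'$, and one must be comfortable applying likelihood equivalence to data that was generated under an intervention on $i'$ but is scored as ordinary observational data under $G^{i'}$ --- which is legitimate because likelihood equivalence is a purely algebraic identity valid for \emph{every} complete dataset. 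I would also remark that running this same computation with $\pi$ taken to be the posterior given the earlier experiments (rather than the prior) is what is actually needed to invoke Theorem~\ref{theorem:general} in the sequential loop, and that in that conditional version the earlier equivalence classes $f_{i_1}(G),\ldots,f_{i_K}(G)$ genuinely enter the argument, which is why all of them are included in the statement.
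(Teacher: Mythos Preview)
Your proof is correct and follows essentially the same route as the paper: both reduce to showing that the marginal likelihood $p(X_{1:N'}^{i'}\mid G)$ depends on $G$ only through the Markov equivalence class of $G^{i'}$, then invoke BDeu likelihood equivalence, and finally use the elementary fact (stated in the paper as Lemma~\ref{lemma:likelihood-equivalence}) that if $p(x\mid G)$ factors through $f_{i'}(G)$ then conditioning on any coarser function of $G$---in particular on the tuple $(f_{i'}(G),f_{i_1}(G),\ldots,f_{i_K}(G))$---yields the desired conditional independence. One small presentational difference: the paper writes $p(X_{1:N'}^{i'}\mid G)$ as the observational marginal likelihood under $G^{i'}$ times a ratio that it argues is $G$-free, whereas you go slightly further and observe that the ratio is exactly $1$ because the interventional prior $\mathrm{Dirichlet}(1/r_{i'},\ldots,1/r_{i'})$ coincides with the BDeu prior for a parentless node; this gives the cleaner identity $p(X_{1:N'}^{i'}\mid G)=p_O^{*}(D'\mid G^{i'})$, but the argument is otherwise the same.
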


\begin{theorem}\label{theorem:identifiable}
Assume the joint model in Equation~\ref{eqn:sequential-model}, and let $D = (X_{1:N_1}^{i_1},\ldots,X_{1:N_K}^{i_K})$ denote the data observed so far.
If $f_e(G)$ is the Markov equivalence class of $G^e$ then
there is a function $g$ such that $g(P_{G^{i'},\beta^{i'}}) = f_{i'}(G)$ almost surely when $(G,\beta)\sim p(G,\beta\mid D)$.
\end{theorem}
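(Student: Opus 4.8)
The plan is to exhibit the function $g$ via the notion of faithfulness, and then to argue that the posterior $p(G,\beta\mid D)$ assigns zero probability to the Lebesgue-null set of parameters at which faithfulness fails, so that $g$ recovers the Markov equivalence class of $G^{i'}$ almost surely.

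First I would set up $g$. Say a distribution $Q$ on $(X_1,\dots,X_V)$ is \emph{faithful} to a DAG $H$ if its conditional independence relations coincide exactly with the d-separations of $H$. Since Markov equivalent DAGs have identical d-separation relations while inequivalent ones do not (Theorem~\ref{theorem-markov-equiv}), if $Q$ is faithful to some DAG $H$ then the collection $\{H' : Q\text{ is faithful to }H'\}$ is precisely the Markov equivalence class of $H$. I would therefore define $g(Q)$ to equal this collection whenever $Q$ is faithful to some DAG, and to equal an arbitrary fixed value otherwise. With this definition, $g(P_{G^{i'},\beta^{i'}}) = f_{i'}(G)$ holds for every $(G,\beta)$ such that $P_{G^{i'},\beta^{i'}}$ is faithful to $G^{i'}$, so it remains only to show that this faithfulness holds $p(G,\beta\mid D)$-almost surely.

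Next I would invoke the classical genericity of faithfulness in discrete Bayesian networks (Meek, 1995): for any fixed DAG $H$ with fixed state cardinalities, the set of conditional-probability parameter vectors for which the induced distribution is \emph{not} faithful to $H$ is a Lebesgue-null subset of the corresponding product of probability simplices. I would apply this with $H = G^{i'}$ for each of the finitely many DAGs $G$, noting that the parameters actually governing $P_{G^{i'},\beta^{i'}}$ are $\{\beta_i : i \neq i'\}$ --- whose parent sets are unchanged, since deleting the edges $\mathrm{pa}(i') \to i'$ does not alter the parents of any other node --- together with $\beta_{i'}^*$ playing the role of the (parentless) CPD of node $i'$. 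It then suffices to check that the posterior law of these coordinates is absolutely continuous with respect to Lebesgue measure. This follows from two observations: (i) under the BDeu prior, the CPD parameters $\beta_i$ and intervention parameters $\beta_i^*$ are independent Dirichlet random vectors, hence jointly absolutely continuous w.r.t.\ Lebesgue measure on the product of simplices, and the coordinate relabeling $\beta \mapsto \beta^{i'}$ preserves this; and (ii) for each fixed $G$, the Radon--Nikodym derivative of the conditional posterior of $\beta$ with respect to the prior $\pi(\beta\mid G)$ equals $p(D\mid G,\beta)/p(D\mid G)$, which is finite because the likelihood is bounded by $1$ and the BDeu marginal likelihood $p(D\mid G)$ is strictly positive (Equation~\ref{eqn:marginal-likelihood}). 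Hence the posterior on $\beta$ given $G$ is absolutely continuous w.r.t.\ the prior, and therefore w.r.t.\ Lebesgue measure, and projecting onto the relevant coordinates preserves this. Consequently the unfaithful set has posterior probability zero given each $G$, and summing over the finitely many $G$ with positive posterior mass shows it has probability zero under $p(G,\beta\mid D)$. Combined with the previous paragraph, this gives $g(P_{G^{i'},\beta^{i'}}) = f_{i'}(G)$ almost surely.

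I expect the main obstacle to be the bookkeeping in the second step: one must pin down exactly which coordinates of $\beta^{i'}$ the distribution $P_{G^{i'},\beta^{i'}}$ depends on and verify that the genericity result is being applied to precisely that parametrization, and one must confirm that conditioning on the previously collected interventional data $D$ --- itself generated under the modified graphs $G^{i_k}$ and parameters $\beta^{i_k}$ --- does not destroy absolute continuity, which is where positivity of the BDeu marginal likelihood is essential.
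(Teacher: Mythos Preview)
Your proposal is correct and follows essentially the same route as the paper: define $g$ via faithfulness, invoke Meek's genericity result for discrete Bayesian networks, and verify that the posterior on the relevant CPD coordinates remains absolutely continuous with respect to Lebesgue measure on the product of simplices. The paper's version differs only in minor bookkeeping---it encodes $g$ as the binary vector of conditional-independence indicators rather than as the set of faithful DAGs, and it organizes the absolute-continuity step around the relabeling $\beta \mapsto \beta^{i'}$ followed by a further conditioning on $G^{i'}$ rather than a direct projection onto coordinates---but the substance is identical.
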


Now, to employ Theorem~\ref{theorem:general}, observe that under the model in Equation~\ref{eqn:sequential-model},
if we condition on $D$ then we obtain the following model:
\begin{align*}
    & (G,\beta) \sim p(G,\beta \mid D) \\
    & X_1^{i'},\ldots,X_N^{i'} | G,\beta \text{ i.i.d. } \sim P_{G^{i'},\beta^{i'}}.
\end{align*}
This follows the form of the abstract model in Section~\ref{sec:general-criterion}, with appropriate notational substitutions.
As above, let $f_e(G)$ be the Markov equivalence class of $G^e$.
Condition~\ref{condition:cond-indep} is that $X_{1:N'}^{i'} \indep G \mid f_{i'}(G)$ in this conditional model given $D$,
or equivalently, $X_{1:N'}^{i'}\indep G \mid f_{i'}(G),D$ under the joint model in Equation~\ref{eqn:sequential-model}.
By Theorem~\ref{theorem:cond-indep}, $X_{1:N'}^{i'} \indep G \mid f_{i'}(G),f_{i_1}(G),\ldots,f_{i_K}(G)$,
so we can expect that $X_{1:N'}^{i'}\indep G \mid f_{i'}(G),D$ holds approximately when $N_1,\ldots,N_K$ are sufficiently large,
since $D = (X_{1:N_1}^{i_1},\ldots,X_{1:N_K}^{i_K})$ and $X_{1:N_k}^{i_k}$ pertains to $f_{i_k}(G)$.
Condition~\ref{condition:identifiable} is that there exists $g$ such that $g(P_{G^{i'},\beta^{i'}}) = f_{i'}(G)$ almost surely
when $(G,\beta)\sim p(G,\beta\mid D)$, which is precisely what Theorem~\ref{theorem:identifiable} shows.
Finally, Condition~\ref{condition:finite} is that $f_{i'}(G)$ takes finitely many values, which is true since there are only finitely many graphs $G$ on $V$ nodes.

Therefore, Theorem~\ref{theorem:general} indicates that selecting the next intervention using the strategy in Section~\ref{sec:selection-of-experiments} with $f_e(G)$ chosen to be the Markov equivalence class of $G^e$ is a natural choice 
to optimally reduce entropy, under the asymptotic approximation that the number of replicates in each experiment is sufficiently large.

\subsection{Sampling from the posterior distribution on graphs}
\label{sec:sampleDAGs}

A large body of work exists on MCMC methods for sampling from $p(G|D)$. This is a challenging task since the number of DAGs increases super-exponentially with the number of nodes and the posterior on graphs is often highly multi-modal. Some have proposed searching the space of graphs using local proposals that add, delete, or reverse edges at random \citep{Madigan1995} and others have improved chain mixing by sampling over the space of node orderings \citep{Friedman2003, Ellis2006}.  We use a clever MCMC algorithm developed by \citet{Eaton07_hybridMCMC} that uses dynamic programming (DP) to construct proposals.  This method explores the space of DAGs using a Metropolis-Hastings algorithm with a proposal distribution that is a mixture of local moves (edge deletions, additions, or reversals) and a global move that proposes a new graph in which an edge exists between two nodes with probability equal to the exact marginal posterior edge probability, computed using DP.

Key to the DP algorithm's ability to compute exact marginal posterior edge probabilities is the assumption of a ``modular prior" over structures. Rather than directly specifying a prior over DAGs, a modular prior requires specifying a prior over node orderings and a prior that gives weight to sets of parents (and not to their relative order). Together these terms define a joint prior over graphs and orders. Defining the prior in this way allows the contribution to the marginal likelihood for nodes with the same parent sets to be cached and re-used for efficient exact computation, regardless of the orderings of the parents; see \citet{Koivisto2006} for details of the DP algorithm and see \citet{Koivisto04}, \citet{Friedman2003}, and \citet{Ellis2006} for further discussion of priors on orderings and graphs. 

A modular prior tends to favor graphs that are consistent with more orderings, such as fully disconnected graphs and tree structures. 
In fact, the modular prior favors tree structures over chains even if the two structures are Markov equivalent. For instance, tree structure $1 \leftarrow 2 \rightarrow 3$ has higher prior probability than the chain $1 \rightarrow 2 \rightarrow 3$ under a modular prior since the tree structure is consistent with two node orderings \citep{Eaton07_hybridMCMC}. While one may want to use a uniform prior over DAGs in the absence of prior knowledge, \cite{Ellis2006} and \cite{Eaton07_hybridMCMC} show how a uniform prior over orderings and flat prior over parent sets together encode a highly nonuniform prior over DAGs. The hybrid MCMC-DP approach that we use \citep{Eaton07_hybridMCMC}  overcomes this limitation of the DP algorithm. With MCMC-DP, we can use an arbitrary prior on graphs and draw valid samples from $p(G|D)$, while benefiting from a fast, data-driven proposal distribution to help traverse the DAG space.  We implemented our method in MATLAB (version 2017a) and we use the BDAGL package \citep{Eaton07_hybridMCMC} to sample from $p(G|D)$ using the MCMC-DP algorithm. Source code is available online at https://github.com/mzemplenyi/OED-graphical-models.

\subsection{Overall algorithm}

The inputs to our proposed algorithm are (i) a set of candidate experiments $\mathcal{E}$, (ii) a mechanism for generating i.i.d.\ samples $(X_1,\ldots,X_V)$ from $P_{e,0}$, the true distribution under experiment $e\in\mathcal{E}$, and (iii) a partition scheme $f_e(G)$ for each experiment $e\in\mathcal{E}$.  For the first experiment, we generate observational data by not intervening on any nodes. Each subsequent experiment sets a single node from $\mathcal{E}$ to a fixed value.  
The algorithm proceeds as follows.  Let $D$ denote the collection of data from the experiments so far.
\begin{enumerate}
    \item Obtain posterior samples from $\pi(G) = p(G|D)$ and approximate the posterior entropy, $H(G)$.
    \item Check the stop criteria. Stop the algorithm if either:
    \begin{enumerate}
        \item $H(G)$ falls below a given entropy tolerance threshold, or 
        \item the maximum number of allowed experiments has been reached.
    \end{enumerate}
    Otherwise, continue.
    \item\label{item:enumpartition} For each $e \in \mathcal{E}$, enumerate the partition \textit{over the sampled graphs} induced by $e$ and calculate $\hat{h}_e$, the approximate posterior entropy over the partition (Equation~\ref{eqn:hhat}).
    \item Select the experiment $e$ that maximizes $\hat{h}_e$ as the next intervention experiment to perform. 
    \item Generate data for experiment $e$: draw $N$ i.i.d.\ samples from $P_{e,0}$.
    \item Combine the new data with the existing data and repeat from the beginning.
\end{enumerate}
For computational efficiency, note that in step \ref{item:enumpartition}, it is only necessary to consider those parts $A$ in the partition $\mathcal{A}_e$ that contain one or more posterior samples.  Since it is not necessary to consider all parts in the partition, this can provide a computational advantage in cases where there are an intractable number of parts.

\section{Previous work}\label{sec:previous-work}
\label{sec:methodsreview}
Previously proposed methods for learning causal network models from observational and interventional data in the non-OED setting typically fall into one of two categories: (1) constraint-based methods, such as the PC-algorithm \citep{Spirtes2001}, that test for conditional independence constraints in the data and select models that match those constraints, and (2) score-based methods, wherein the space of structures is searched for ones that that are most supported by the data, as quantified by scores such as the Bayesian marginal likelihood or BIC. More recently, OED methods, also referred to as active learning methods, have been developed for both approaches.  

\cite{HeGeng2008}, \cite{Eberhardt08}, and \cite{Hauser2012b:EuroPGM} propose constraint-based active learning methods that first use observational data or prior knowledge to construct a partially directed acyclic graph (PDAG), also called an essential chain graph. They then use graph-theoretic results to select the interventions required to orient all edges in the essential chain graph, using variations on criteria that generally seek to minimize the number of undirected edges in the post-intervention equivalence class of graphs. These algorithms take as a starting point a known observational essential graph, which would require infinite observational data in principle, and in practice is estimated from finite observational data. However, they often do not perform as well in finite sample settings where estimation errors introduced in the initial chain graph can lead to an incorrectly estimated DAG.  \cite{Hauser2012} demonstrate that, in the finite sample setting, estimation errors can be reduced by using interventional data to refine not only the directionality of uncompelled edges in the chain graph (as done by He and Geng), but also the skeleton of the chain graph.  

\cite{TongKoller} and \cite{Murphy2001} develop score-based active learning methods for Bayesian networks. They use MCMC to sample from the space of node orderings (Tong and Koller) or graphs (Murphy) along with a decision-theoretic framework to select interventions. Both methods are computationally expensive since they involve computing the predictive density for each sampled graph subject to each possible intervention. Other methods forgo the predictive sampling step and instead use selection criteria with lower computational burdens. \cite{Pournara} consider the equivalence classes of high-scoring networks (determined by a greedy hill-climbing search) and select interventions that tend to partition transition sequence equivalence classes into smaller and smaller subclasses.  \cite{LiLeong} propose a `non-symmetrical entropy' criterion closely related to Tong and Koller's loss function, but use the DP algorithm rather than MCMC to calculate edge probabilities between nodes. \cite{Cho} adapt the active learning framework of Murphy (2001) to the Gaussian Bayesian network setting. \cite{Ness} use a Bayesian framework that allows one to directly encode prior causal knowledge about each edge, which then induces a prior over graphs. Their method, bninfo, takes a set of highly scoring PDAGs and returns the minimally-sized batch of interventions that is expected to correctly orient the greatest number of edges; this method shares elements of both the score-based and constraint-based methods. 

The previous work that is most similar to ours is the method of \citet{Almudevar}, who explore the performance of an entropy-based criterion that uses the idea of partitioning the space of graphs to minimize the entropy on the posterior on graphs.  While our method is based on a similar theoretical justification as that of \citet{Almudevar}, we generalize the approach to a large class of partition schemes and we improve performance by using a more efficient MCMC procedure. Further, in contrast to the limited simulation study of \citet{Almudevar}, we provide a more extensive set of empirical results on a wide variety of networks, we compare with other leading algorithms, and we make our software publicly available.

For a comprehensive review of optimal experimental design methods for other models, including Boolean networks and  differential equation models, see \citet{Sverchkov}.

\section{Simulation results}\label{sec:simulation-results}
\label{sec:simulations}

In this section we first evaluate the performance of our OED method under various partition schemes. Then we compare our method to the methods of \citet{LiLeong} and \citet{Ness}. Throughout, we evaluate performance using two metrics:
\begin{enumerate}
    \item Mean Hamming distance: After each experiment, we calculate the posterior probability of an edge between nodes $X_i$ and $X_j$ via:
    \begin{align}
        p(i \rightarrow j\mid D) = \sum_{G \in \mathcal{S}_{i \rightarrow j}} p(G|D)
    \end{align}
    where $\mathcal{S}_{i \rightarrow j}$  is the set of graphs containing the edge $i \rightarrow j$. We then construct the median probability graph, defined as the graph containing only those edges for which $p(i \rightarrow j\mid D) \geq 0.5$ \citep{Castelletti2018, Peterson2015}. The Hamming distance between the median probability graph and the ground truth network is equal to the number of false detected edges (false positives) and missing edges (false negatives) in the median probability graph. We then take the average of this distance over all simulations.  
 
    \item Mean true positive rate (TPR): After each experiment, we calculate the proportion of correctly detected edges present in the median probability graph among the edges in the ground truth network. We then find the average of this proportion over all simulations. 

\end{enumerate}
For all figures, error bars represent the standard error of the mean over 50 simulations.    

\subsection{Comparison of partition schemes}
Our first simulation study compared the five different ways, defined in Section \ref{sec:diffPartitions}, to partition graphs sampled from the posterior $p(G|D)$.  We randomly generated a discrete graph with 10 binary nodes and generated observational and intervention samples from this ground truth network; see Figure \ref{fig:graph-structures} for the graph's structure.
\begin{figure}
	\centering
	\begin{tabular}{cc}
  \includegraphics[scale = 0.6]{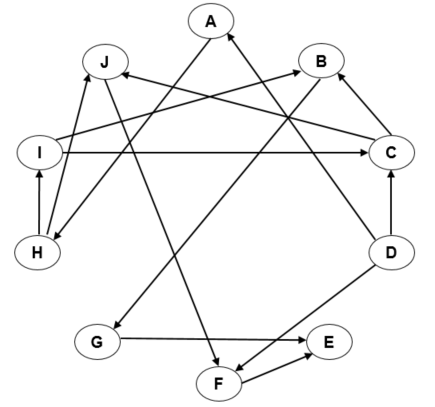}   &  \hspace{1cm}
  \includegraphics[scale = 0.6]{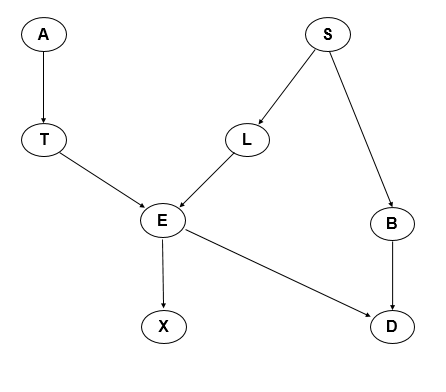}
\end{tabular}
\caption{Left: Structure of ten-node network. Right: Structure of Asia network as defined by the Bayesian Network Repository available in the \emph{bnlearn} R package \citep{Scutari}.}
	\label{fig:graph-structures}
\end{figure}

For each partition scheme, we ran $n_{sims} = 50$ simulations wherein each simulation consisted of a series of $n_{exp} = 7$ experiments. For the first experiment, we generated $n_{obs} = 1000$ observational samples to form the initial dataset $D$. For each of the six subsequent experiments, we generated $n_{intv} = 1000$ additional intervention samples and appended them to $D$, where the manipulated node was selected via our entropy-based selection criterion based on the partition scheme under consideration. After generating data for each experiment, we used MCMC-DP to draw 250,000 posterior samples from $p(G|D)$, discarded the first 150,000 samples as burn-in, and used the remaining 100,000 samples for posterior inference. We used a uniform prior over graphs and did not allow a node to be manipulated more than once. In addition to the five partition schemes, we also evaluated a ``random learner'' that randomly selected the next node to be manipulated, rather than using an entropy criterion. 

Figure \ref{fig:internal} shows the mean Hamming distance and mean TPR for the five entropy-based methods and the random learner on the 10-node network. Each of the entropy-based methods performed better than the random learner according to both metrics. For this network, the entropy-based methods all performed similarly. We found this to be the case in simulations using other randomly generated networks as well.

The benefit of an entropy-driven experimental design method over randomly selected interventions is evident from Figure \ref{fig:internal}.  To fall within a given mean Hamming distance from the true graph required far fewer interventions using the OED methods compared to the random learner.

\begin{figure}
\centering
\includegraphics[scale = 0.65]{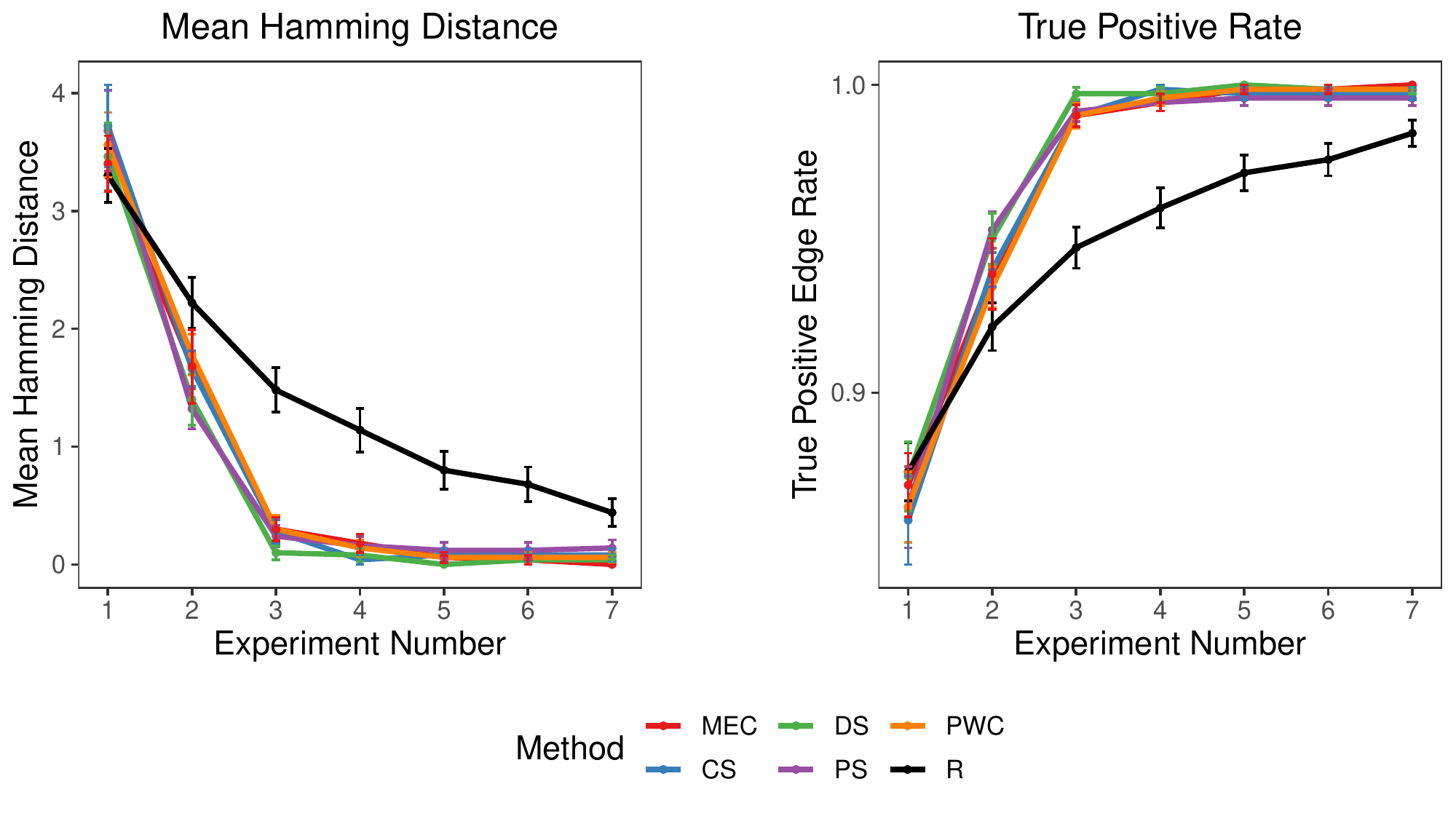}
	\caption{Mean Hamming distance and mean TPR for five entropy-based OED methods (each using a different partition scheme) and the random learner on a ten-node binary network. Settings: $n_{sim} = 50$, $n_{exp}$ = 7, $n_{obs} = 1000$, $n_{intv} = 1000$. MEC = Markov equivalence class; CS = child set; DS = descendant set; PS = parent set; PWC = pairwise child; R = random learner.}
	\label{fig:internal}
\end{figure}

\subsection{Comparison with other methods}
In addition to comparing various partition schemes, we also compared our OED algorithm to two other active learning methods. 

The first is a method proposed by \citet{LiLeong} that also uses an entropy-based criterion. In fact, what Li and Leong refer to as their non-symmetrical entropy criterion for selecting interventions is equivalent to the pairwise child (PWC) entropy criterion described in Section \ref{sec:diffPartitions}. The difference between the methods, however, is that \citet{LiLeong} do not use MCMC to sample from the posterior on graphs. Rather, they evaluate their criterion using exact edge probabilities computed using a DP algorithm \citep{Koivisto2006}; see Section~\ref{sec:sampleDAGs} for details on the DP algorithm and its assumptions. We refer to the method of Li and Leong as ``DP" in subsequent figures. 

The second method, ``bninfo"  \citep{Ness}, evaluates the expected causal information gain of candidate interventions and outputs a minimally-sized batch of interventions expected to maximize that gain. \citet{Ness} define causal information gain as the increase in correctly oriented edges in the causal network. Their algorithm constructs the recommended batch of interventions one node at a time, in descending order of expected causal information gain. In order to compare our method with bninfo, for a given causal network, we took the sequence of interventions that bninfo recommended and used MCMC-DP to sample from the posterior distribution on graphs between each recommended intervention. This allowed us to construct the median probability graph and calculate the Hamming distance and TPR after each intervention that bninfo recommended. Note that \citet{Ness} provide a way to encode prior knowledge on each edge in the graph, but to facilitate comparison with the other methods, we use a uniform prior on the graph topology. 

\subsubsection{Asia network}
 We first assessed performance of the various methods on the Asia network, a commonly used network for comparing network inference methods, first described by \citet{Lauritzen1998}. The Asia network consists of eight binary nodes that describe the relationship between lung diseases and visits to Asia (Figure \ref{fig:graph-structures}). We used the conditional probability table provided by the \emph{bnlearn} R package to generate observational and interventional data \citep{Scutari}. 
 
 Figure \ref{fig:asia} compares the performance of our method (using the MEC partition scheme), DP, bninfo, and the random learner. For this simulation study we used the following settings: $n_{sim} = 50$, $n_{exp} = 9$, $n_{obs} = 300$, $n_{intv} = 300$. For the MCMC methods, we drew 250,000 samples, discarded the first 150,000 samples, and used the remaining 100,000 samples for inference. For the sake of clarity, we omitted the other partition schemes since they performed similarly to the MEC partition scheme. 
 
After the first intervention experiment, the four methods performed nearly identically. The methods then diverged at experiment 3, with the random learner and bninfo lagging behind MEC and DP. Note that the maximum batch size of interventions recommended by bninfo across the simulations consisted of seven nodes, so the bninfo results in Figure \ref{fig:asia} only extend to the eighth experiment (the observational experiment followed by seven interventions). For the other methods, by the ninth experiment, all 8 nodes had been manipulated since we did not allow for repeat interventions.  Thus, it makes sense that the MEC and random learner results align by the last experiment; they have each sampled the same data, albeit in different orders. Even though the DP method had also sampled intervention data for all nodes by the ninth experiment, it does not converge with the other methods because the DP method uses a different prior over graphs (a non-uniform prior induced by its modular joint prior over node ordering and parent sets as described in Section \ref{sec:sampleDAGs}). 
 
\begin{figure}
	\centering
     \includegraphics[scale = 0.6]{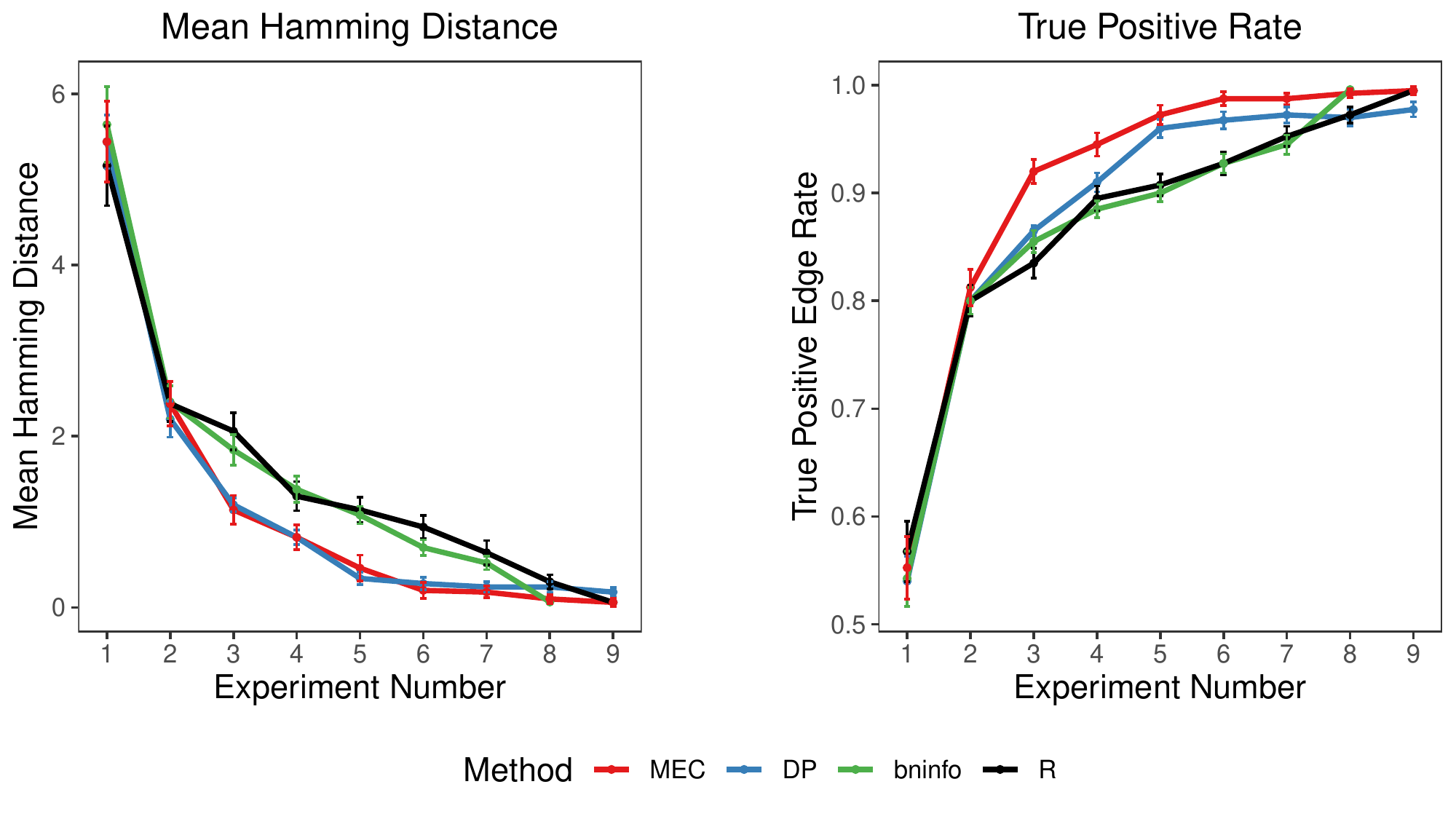}   
	\caption{Mean Hamming distance and mean TPR on the 8-node Asia network with the following settings: $n_{sim} = 50$, $n_{exp}$ = 9, $n_{obs} = 300$, $n_{intv} = 300$. MEC = our method with MEC partition scheme; DP = dynamic programming method of \citet{LiLeong}; R = random learner.}
	\label{fig:asia}
\end{figure} 

\subsubsection{Effect of network topology on inference}
Next, we explored the effect of network topology on performance of MEC, PWC, DP, bninfo, and the random learner. Here, as above, ``PWC'' refers to using our method with the PWC partition scheme. We include PWC for direct comparison with the DP method in order to illustrate how two methods that employ the same entropy criterion for selecting interventions---but differ in how posterior edge probabilities are calculated (see Section~\ref{sec:sampleDAGs})---may perform differently depending on network topology. We considered two 8-node networks (Figure \ref{fig:chain-tree-structures}), one with a chain structure and one with a tree structure. 

For the chain network, we used the settings $n_{sim} = 50$, $n_{exp}$ = 7, $n_{obs} = 1000$, $n_{intv} = 1000$. The DP algorithm performed worse than the other methods, including the random learner, until the fourth experiment. Interestingly, DP performed worse than PWC, even though the two use the same entropy criterion. This can be explained by the fact that, as described in Section \ref{sec:sampleDAGs}, the non-uniform prior over graphs that the DP algorithm uses in order to achieve its computational efficiency puts less mass on chain structures relative to other structures. Meanwhile, the hybrid MCMC-DP approach we used in the PWC method does not have such constraints on its prior over structures, thus, it performed better in this setting since a uniform prior on graphs was used.

For the tree network, we used the settings $n_{sim} = 50$, $n_{exp}$ = 8, $n_{obs} = 200$, $n_{intv} = 200$.  MEC outperformed the other methods according to both mean Hamming distance and TPR, but all methods performed well, falling within a mean Hamming distance of one from the true network by the third experiment. While the DP method initially had a higher mean Hamming distance and lower TPR than the other methods, by the fourth experiment DP outperformed the PWC method. This illustrates that the DP method can work better than PWC when the true graph is more probable under its implicitly assumed prior. 

\begin{figure}
\centering
\begin{tabular}{ll}
{\includegraphics[scale = 0.4]{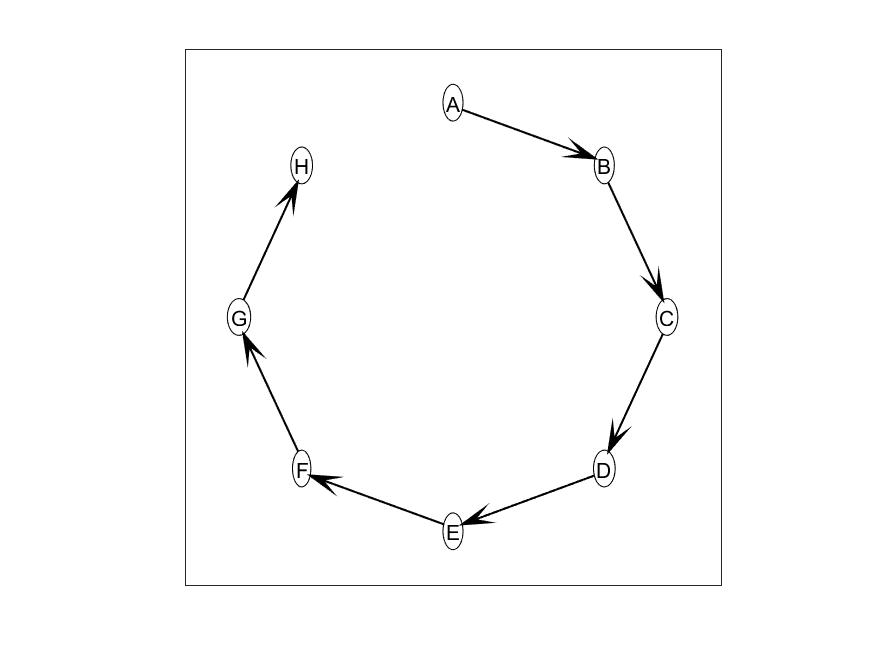}} &
{\includegraphics[scale = 0.4]{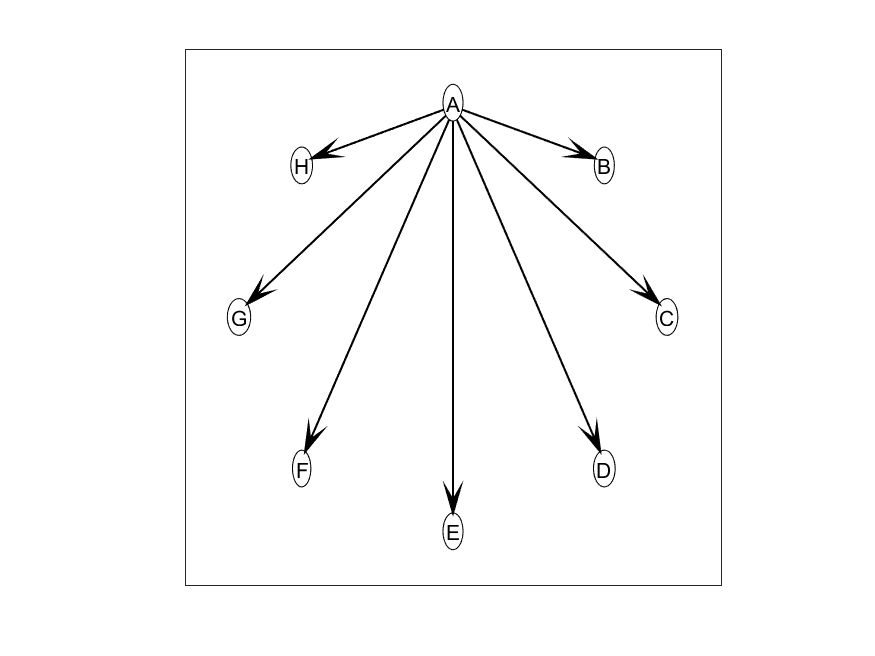}} 
\end{tabular}
\caption{Topology for the 8-node chain structure (left) and 8-node tree structure (right).}
\label{fig:chain-tree-structures}
\end{figure}

\begin{figure}
\centering
\includegraphics[scale = 0.6]{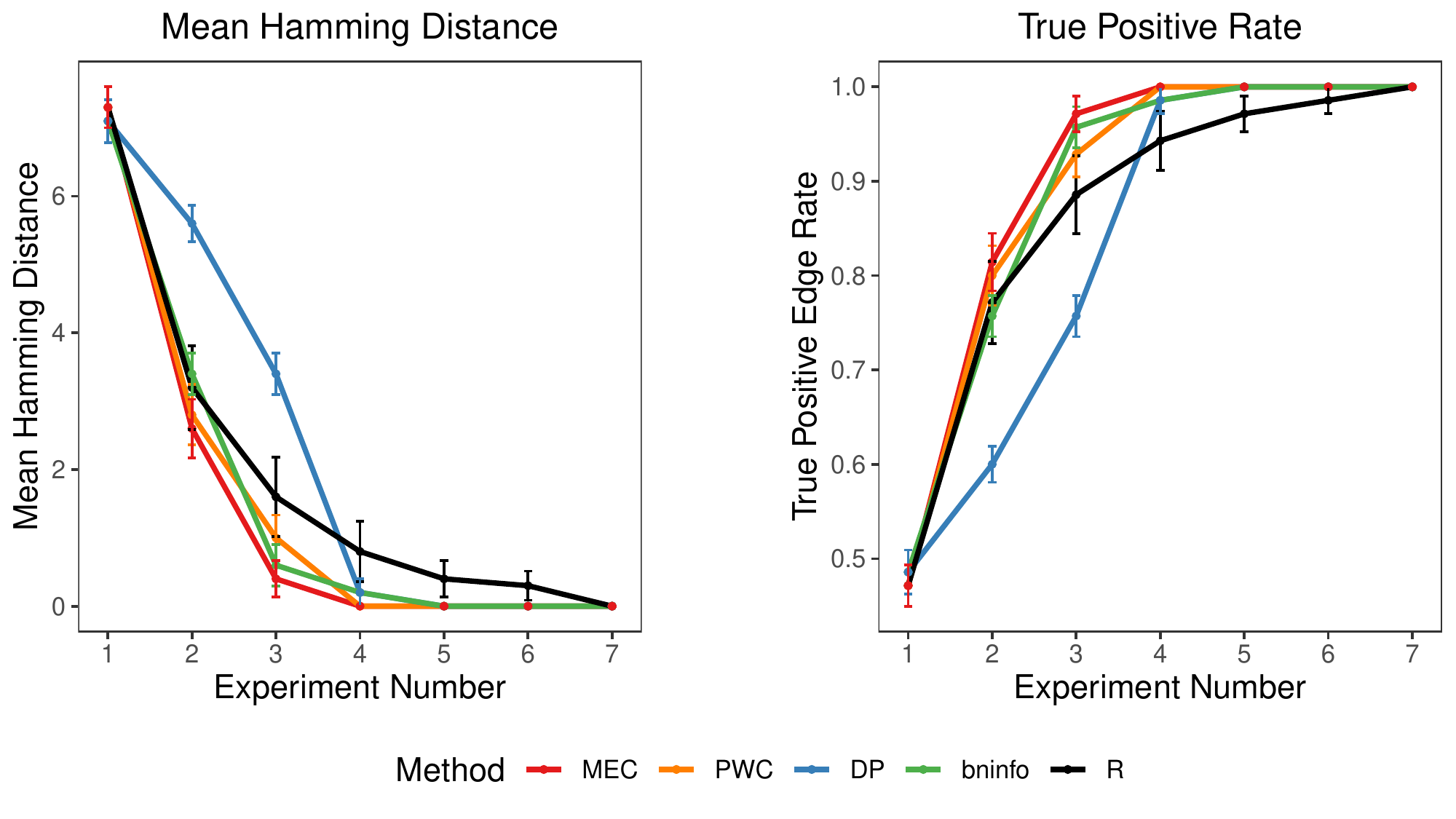}
\caption{Mean Hamming distance and mean TPR for the 8-node chain structure. MEC = our method with MEC partition scheme;  PWC = our method with pairwise child partition scheme; DP = dynamic programming method of \citet{LiLeong}; R = random learner.}
\label{fig:8-node-chain}
\end{figure}

\begin{figure}
\centering
\includegraphics[scale = 0.6]{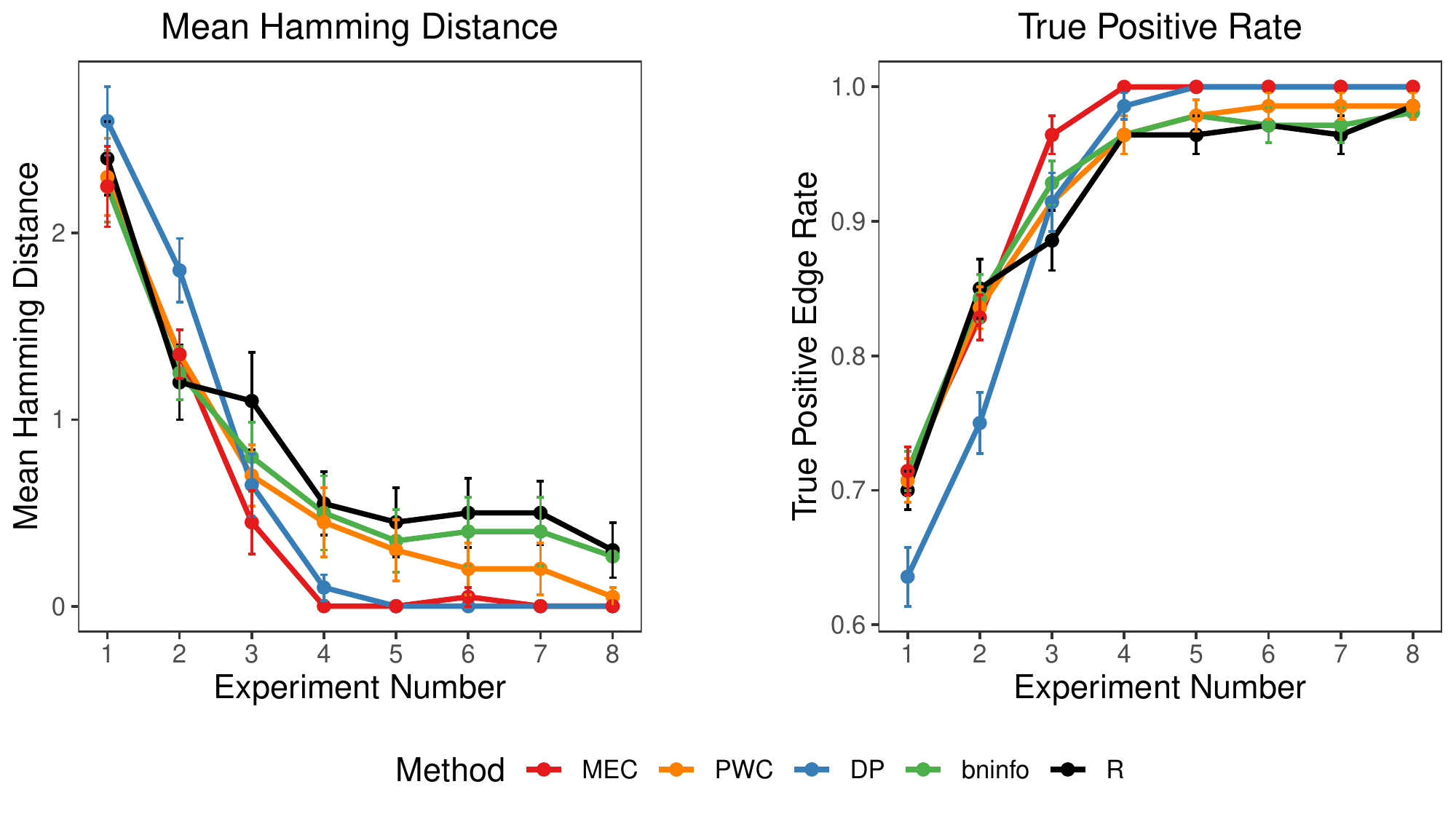}
\caption{Mean Hamming distance and mean TPR for the 8-node tree structure. MEC = our method with MEC partition scheme; PWC = our method with pairwise child partition scheme; DP = dynamic programming method of \citet{LiLeong}; R = random learner.}
\label{fig:8-node-tree}
\end{figure}

\section{Application to a cell-signaling network}\label{sec:application}
OED methods for graphical models are often developed with the goal of inferring biological networks, such as gene regulatory networks or cell-signaling networks \citep{Cho, Ness, Pournara, Sverchkov}. Especially in light of recent advances in the precision of gene-editing technologies, the ability to iterate between experimentation and analysis by adaptively selecting experiments is a promising avenue for reconstructing biological networks. In this section, we first apply our OED method, as well as the DP and bninfo methods, to human T-cell signaling data collected by \citet{Sachs05}. We then explore the performance of the methods on a simulated data set based on the Sachs network.  

\subsection{Analysis on real experimental data from the Sachs network}
The Sachs data set consists of concentration levels measured via flow cytometry for 11 proteins involved in activating the immune system. The true network describing the relationship between these proteins is unknown. Many network inference studies have explored this data set, but what they define as the benchmark graph often varies by 1-2 edges; this is likely because the biologists' consensus network is complex and contains a bidirectional relationship that would induce a cycle (Figure~\ref{fig:sachsNetwork}, left panel). Here, we use the benchmark network provided by \cite{Scutari} (Figure~\ref{fig:sachsNetwork}, right panel) as well as the discretized data set available via the \emph{bnlearn} package. A portion of the data, 1800 samples, was gathered under no targeted interventions and the remaining 3600 samples were collected after activating or inhibiting five signaling proteins: Mek12, Pip2, Akt, PKA, and PKC. (Mek12, Pip2, and Akt were each inhibited in 600 samples, PKA was activated in 600 samples, and PKC was inhibited in 600 samples and activated in an additional 600 samples.) 

We compared how well the MEC, DP, bninfo, and random intervention methods inferred the cell-signaling network over a series of six experiments. For all methods, the first experiment used the 1800 observational samples. For subsequent experiments, each method then chose from the set of five candidate interventions performed by \cite{Sachs05}.  Figure~\ref{fig:sachs} summarizes the results. While all methods end closer to the benchmark structure after accumulating data from the five interventions, no method performs particularly well. Curiously, the mean Hamming distance initially gets worse after the first couple of experiments before getting better.  See Figure~\ref{app:fig-adj-mat} (Appendix B) for a comparison of the benchmark adjacency matrix to the matrices estimated by the MEC, DP, and bninfo methods.

To determine the upper and lower bounds on performance for this data, we also considered all 120 possible permutations of the sequence of five manipulated nodes. Figure~\ref{fig:sachs} shows the results for the best- and worst-performing fixed sequences in dashed lines, as determined by mean Hamming distance averaged over the six experiments. By ``fixed" sequence, we mean the sequence of manipulated nodes was prespecified before the first experiment as opposed to adaptively or randomly chosen over the course of the experiments. Even the fixed sequence with the lowest mean Hamming distance over the six experiments (Figure~\ref{fig:sachs} orange dashed line) initially moves further from the benchmark network after the first intervention.    

There are several reasons why the methods perform differently on this data set than they did in simulation studies. First, if the true biological network consists of cycles, then the directed acyclic graphical models assumed by each of the algorithms would be misspecified. The consensus network determined by biologists suggests a short cycle among PIP3 $\rightarrow$ PLC$\gamma$ $\rightarrow$ PIP2, so model misspecification is a concern. \cite{Mooij2013} identified another reason why the model might be misspecified: \cite{Sachs05} used an experimental intervention that changed the \emph{activity} of the target proteins rather than directly intervening on the abundance of the protein. An intervention that affects the underlying topology of the network in ways other than only removing arrows into the manipulated protein differs from the type of edge-breaking intervention that our model assumes. Even if the acyclicity and edge-breaking intervention assumptions were not violated, the Hamming distance at the end of the six experiments was likely high because we were limited to interventions on five candidate proteins rather than all 11 proteins. Our results would also change if we used a different discretization of the data than the one provided by \cite{Sachs05}. However, given that \cite{Cho} used a continuous version of the Sachs data set for their Gaussian Bayesian network and encountered similar difficulties in network reconstruction, we do not believe our results would improve substantially if we used a different discretization of the data. 

We note that the performance of the bninfo method reported here differs from that published by \cite{Ness} for the Sachs network. This is likely due in large part to differences in the data sets used in our analyses. \cite{Ness} used 11,672 observational samples (they refer to this as ``historic data"), whereas we used only the 1800 observational samples provided in the \emph{bnlearn} R package since \cite{Ness} were unable to provide us with access to the larger data set they used in their analysis.     

\begin{figure}
	\centering
	\begin{minipage}{.5\textwidth}
		\centering
		\includegraphics[width=1.0\linewidth]{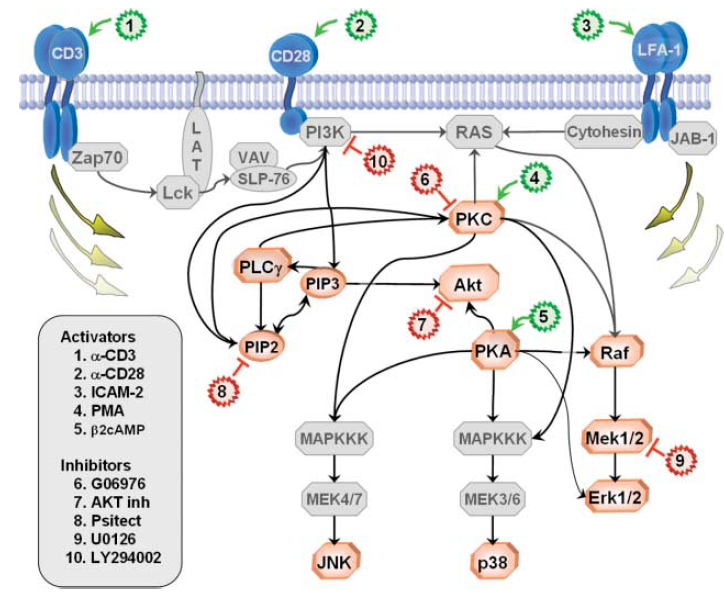}
	\end{minipage}%
	\begin{minipage}{.5\textwidth}
		\centering
		\includegraphics[width=1.0\linewidth]{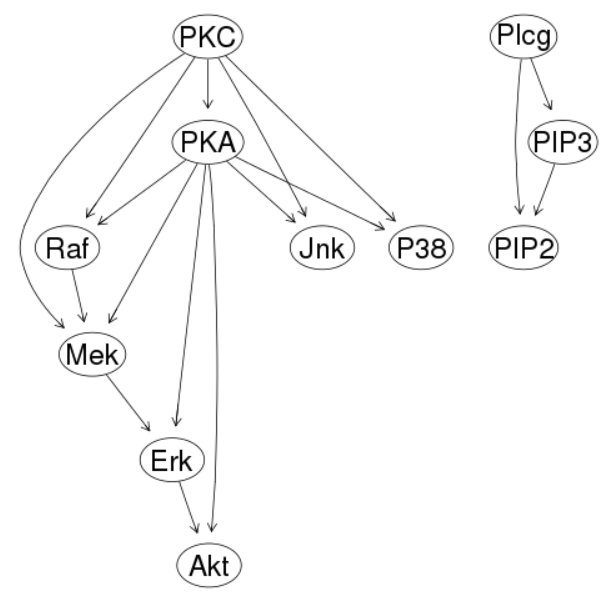}
	\end{minipage}
	\caption{Left: Signaling network diagram taken from \citet{Sachs05}. Reprinted with permission from AAAS. Right: Network structure from the Bayesian Network Repository available in the \emph{bnlearn} R package (\cite{Scutari}) used here as the benchmark network.}
	\label{fig:sachsNetwork}
\end{figure} 

\begin{figure}
	\centering
		\includegraphics[width=1.0\linewidth]{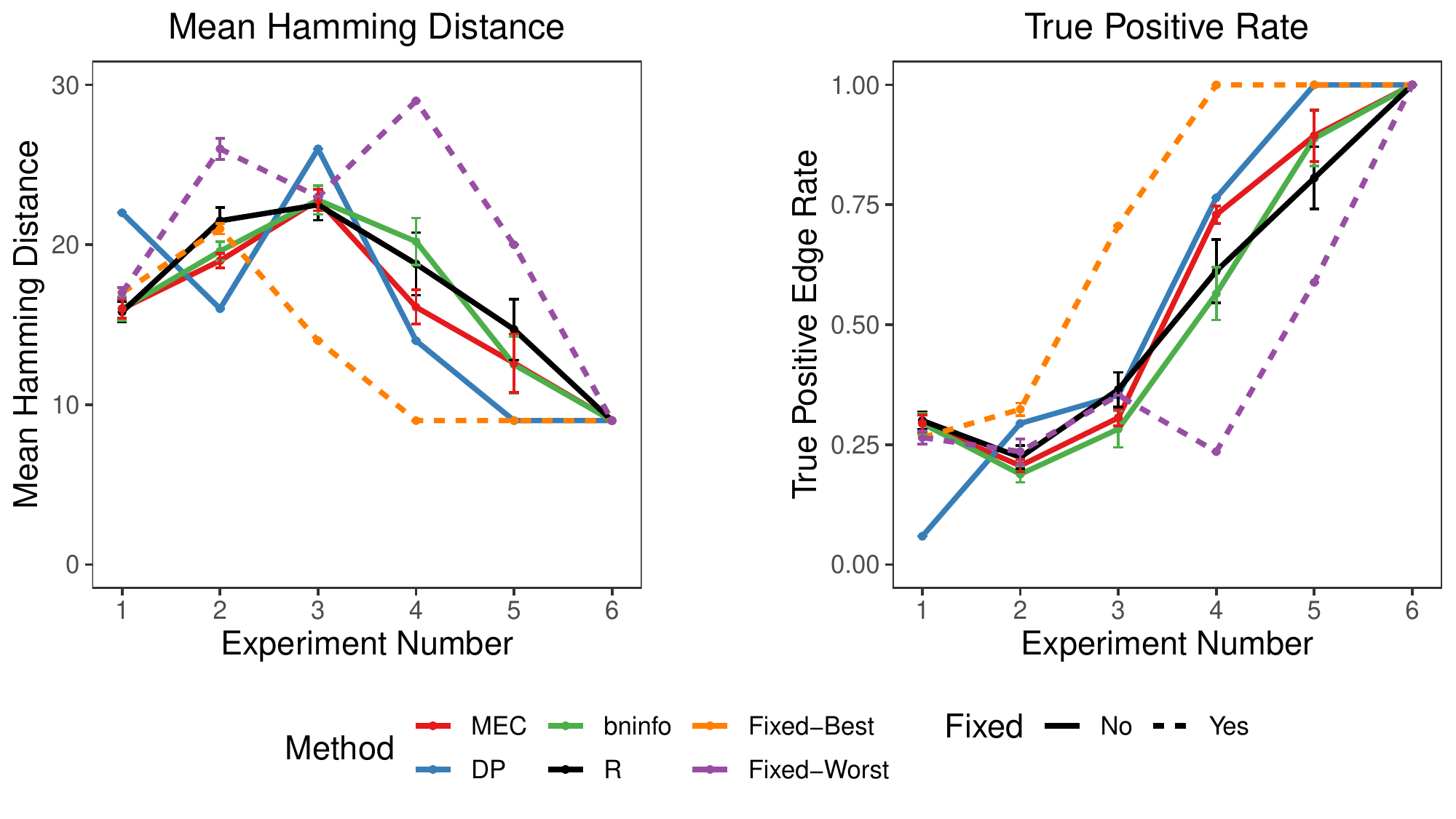}
	\caption{Mean Hamming distance and true positive rate on the cell- signaling data from the 11-node Sachs network data. MEC = our method with MEC partition scheme; DP = dynamic programming method of \citet{LiLeong}; R = random learner. Dashed lines represent the best-case and worst-case fixed sequence of interventions. }
	\label{fig:sachs}
\end{figure}

\subsection{Analysis on simulated data from the Sachs benchmark network}
To understand whether the poor performance seen on the Sachs data was due to misspecification, we tried simulating data from the benchmark network to see how the methods perform when the model assumptions hold. Figure \ref{fig:simSachs} shows the results of a simulation study comparing the same methods as in Figure \ref{fig:sachs}, but using simulated data generated from the benchmark network in Figure \ref{fig:sachs}, using a conditional probability table estimated from the Sachs data and available in the \emph{bnlearn} R package. The MEC method performed well and fell within a Hamming distance of one from the benchmark network by the fourth experiment, on average. Bninfo also initially performed well, but then plateaued sooner than the other OED methods, failing to reach a Hamming distance of zero or TPR of 1 by the seventh experiment. The higher mean Hamming distance of the DP method for the first two experiments arose from a combination of both a lower true positive rate and higher false negative rate than the other methods. This is likely because the DP prior over graphs tends to encourage sparsity, but the ground truth network contains nodes like PKC and PKA with five and six children, respectively.

\begin{figure}
\centering
\includegraphics[width=1.0\linewidth]{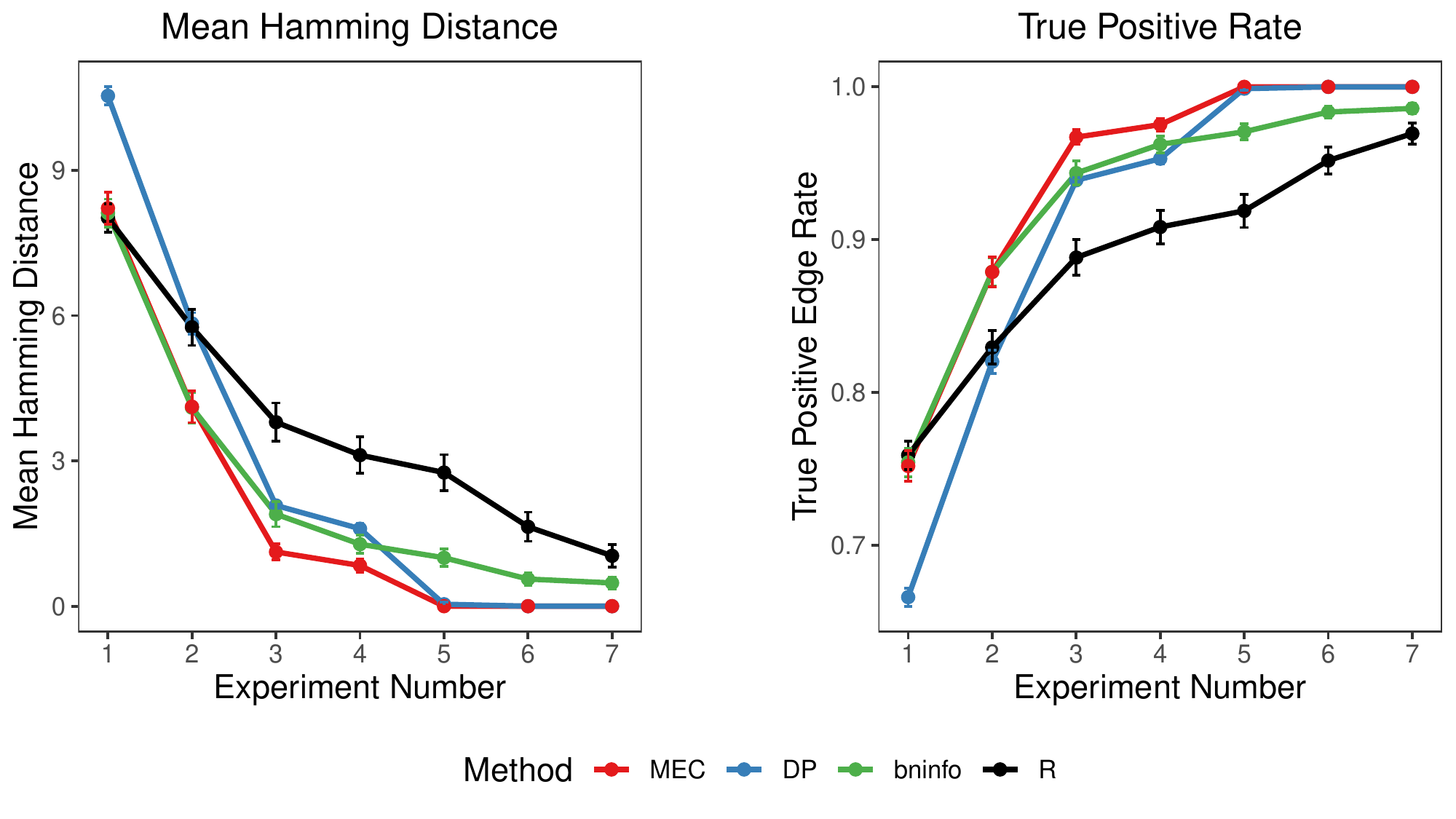}
\caption{Mean Hamming distance and true positive rate on the cell-signaling data from the simulated 11-node Sachs network data.
MEC = our method with MEC partition scheme; DP = dynamic programming method of \citet{LiLeong}; R = random learner.}
\label{fig:simSachs}
\end{figure}

\section{Conclusion}\label{sec:conclusion}
We presented a novel Bayesian OED methodology for optimizing the experiment selection process in a computationally tractable way. The core of the method is a criterion for selecting the experiment that is expected to yield the greatest reduction in posterior entropy. We found that the method efficiently infers causal relationships in networks with various topologies, with the greatest gains in information coming from the first few optimally-chosen interventions. We provided a theoretical justification for using Markov equivalence classes as the choice of partition in our method, and in simulations, we found that this entropy criterion generally performs well empirically. 

Currently, our method is limited to networks with less than 25 nodes due to the super-exponential growth in the number of candidate graphs with respect to the number of nodes and the computational limits of the DP-based MCMC proposals. 
Scaling up the method to work on larger networks is an area for future work.

The difficulty that many OED and active learning methods, including our own, have in inferring the Sachs network suggests that additional research is needed on ways of relaxing the acyclicity assumption and being more robust to model misspecification in general. Additionally, the OED and active learning fields would benefit from additional data sets similar in nature to the Sachs data with a mix of observational and intervention data. These will be helpful for evaluating and comparing OED methods. As recent advances in gene-editing technologies make targeted interventions more feasible, we expect these types of data sets will become more widely available, and the demand for OED methods in the biological sciences will grow in tandem.

 

\begin{appendix}

\section{Theory}

\begin{lemma}
\label{lemma:doob-entropy}
Suppose $(\theta,\nu)\sim\pi$, $X_1,\ldots,X_N|\theta,\nu\sim P_{\theta,\nu}$ i.i.d., and $f(\theta)$ satisfies Conditions~\ref{condition:identifiable} and \ref{condition:finite}.  Then $H(f(\theta) \mid X_{1:N}) \to 0$ as $N\to\infty$.
\end{lemma}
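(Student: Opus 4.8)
The plan is to reduce the claim to posterior consistency for the finite-valued quantity $f(\theta)$, and then pass to the limit under the expectation. Write $Y := f(\theta)$, which by Condition~\ref{condition:finite} takes values in a finite set $\mathcal{Y} = \{y_1,\dots,y_m\}$, and set $q_N(y) := p(f(\theta) = y \mid X_{1:N})$ for $y\in\mathcal{Y}$. Unpacking the definition of conditional entropy,
\[
H(f(\theta)\mid X_{1:N}) = \mathrm{E}\!\left[\,-\sum_{y\in\mathcal{Y}} q_N(y)\log q_N(y)\,\right],
\]
with the convention $0\log 0 = 0$. The integrand is the Shannon entropy of a probability vector on $m$ points, hence takes values in $[0,\log m]$, so it suffices to show that it tends to $0$ almost surely and then apply the bounded convergence theorem.

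The key step I would establish is that $q_N(y) \to \mathds{1}(f(\theta) = y)$ almost surely for each $y\in\mathcal{Y}$. By Condition~\ref{condition:identifiable}, $f(\theta) = g(P_{\theta,\nu})$ almost surely; since $X_1,X_2,\dots$ are i.i.d.\ from $P_{\theta,\nu}$, the sampling distribution $P_{\theta,\nu}$ is itself recoverable from the infinite sequence $X_{1:\infty}$ (for instance, as the a.s.\ weak limit of the empirical measures, using mild regularity of the sample space), so $f(\theta)$ is, up to a null set, measurable with respect to $\sigma(X_{1:\infty})$. Consequently $\mathds{1}(f(\theta)=y) = \mathrm{E}[\mathds{1}(f(\theta)=y)\mid X_{1:\infty}]$ a.s., and since $q_N(y) = \mathrm{E}[\mathds{1}(f(\theta)=y)\mid X_{1:N}]$ is a bounded martingale in $N$ (w.r.t.\ the filtration generated by the data), L\'evy's upward convergence theorem gives $q_N(y)\to \mathds{1}(f(\theta)=y)$ a.s. This is precisely the Doob-style posterior consistency referenced in \citep{Doob_1949,miller2018detailed}, specialized to the identifiable function $f(\theta)$.

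Given this, the probability vector $(q_N(y))_{y\in\mathcal{Y}}$ converges almost surely to a vertex of the probability simplex in $\mathbb{R}^m$. Since $q\mapsto -\sum_y q(y)\log q(y)$ is continuous on the simplex (with $0\log 0 = 0$) and vanishes at each vertex, the integrand in the display above converges to $0$ almost surely. Being uniformly bounded by $\log m$, the bounded convergence theorem then yields $H(f(\theta)\mid X_{1:N})\to 0$, as claimed.

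The only step carrying real content is the almost-sure posterior consistency of $f(\theta)$, and the point to be careful about is that it is applied to $f(\theta)$ — which Condition~\ref{condition:identifiable} makes identifiable — rather than to the possibly non-identifiable full parameter $(\theta,\nu)$; the nuisance parameter $\nu$ is exactly what $g$ is allowed to ``marginalize over.'' The remaining ingredients — the reduction to a finite alphabet, the boundedness of Shannon entropy on a finite alphabet, and continuity of entropy at the vertices of the simplex — are routine.
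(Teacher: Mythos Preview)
Your proposal is correct and follows essentially the same route as the paper: establish $p(f(\theta)=y\mid X_{1:N})\to\mathds{1}(f(\theta)=y)$ a.s.\ via the Doob/L\'evy martingale argument (using that $g(P_{\theta,\nu})=f(\theta)$ is $\sigma(X_{1:\infty})$-measurable), then pass the finite-alphabet entropy to zero by continuity at the vertices and bounded/dominated convergence. If anything, you spell out the martingale step and the $\sigma(X_{1:\infty})$-measurability more explicitly than the paper does.
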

\begin{proof}
Since $g(P_{\theta,\nu}) = f(\theta)$ a.s.\ under the prior, then the same also holds a.s.\ under the posterior. Thus, for any value $y$ in the range of $f$,
\begin{align*}
p(f(\theta)=y \mid X_{1:N}) &= 
p(g(P_{\theta,\nu})=y \mid X_{1:N}) = 
\mathrm{E}(\mathds{1}(g(P_{\theta,\nu})=y) \mid X_{1:N}) \\
&\xrightarrow[N\to\infty]{\mathrm{a.s.}} \mathds{1}(g(P_{\theta,\nu})=y)
\overset{\mathrm{a.s.}}{=} \mathds{1}(f(\theta)=y)
\end{align*}
where $\mathds{1}(\cdot)$ is the indicator function.
Here, the limiting value is a random variable in which $(\theta,\nu)\sim\pi$, whereas $(\theta,\nu)$ is integrated out in the probabilities/expectations.
Thus, since the range of $f$ is finite,
$$ -\sum_y p(f(\theta)=y\mid X_{1:N})\log p(f(\theta)=y\mid X_{1:N})
\xrightarrow[N\to\infty]{\mathrm{a.s.}} 0,$$
with the convention that $0 \log 0 = 0$.
Since the entropy of a random variable on a finite set is bounded, then
by the dominated convergence theorem,
$$ H(f(\theta)\mid X_{1:N}) = \mathrm{E}\Big( -\sum_y p(f(\theta)=y\mid X_{1:N})\log p(f(\theta)=y\mid X_{1:N})\Big)
\xrightarrow[N\to\infty]{} 0. $$
This completes the proof.
\end{proof}

\begin{lemma}
\label{lemma:likelihood-equivalence}
Let $X$ and $\theta$ be random variables with joint density $p(x,\theta)$.
Suppose $f(\theta)$ is a discrete random variable such that:
for any $\theta,\theta'$, if $f(\theta)=f(\theta')$ then for all $x$, $p(x|\theta) = p(x|\theta')$.
Then $X \indep \theta \mid f(\theta)$.
\end{lemma}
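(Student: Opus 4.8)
The plan is to prove the equivalent pointwise statement $p(x\mid\theta) = p(x\mid f(\theta))$ for almost every $(x,\theta)$. Since $f(\theta)$ is a deterministic function of $\theta$, the $\sigma$-algebra generated by $(\theta,f(\theta))$ coincides with that generated by $\theta$, so $p(x\mid\theta,f(\theta)) = p(x\mid\theta)$, and hence the displayed equality is exactly the assertion $X\indep\theta\mid f(\theta)$.

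First I would exploit the hypothesis to define, for each value $y$ in the countable range of $f$, a function $q(\cdot\mid y)$ by $q(x\mid y) := p(x\mid\theta)$ for any $\theta$ with $f(\theta)=y$. The stated condition — that $f(\theta)=f(\theta')$ forces $p(x\mid\theta)=p(x\mid\theta')$ for all $x$ — is precisely what makes this unambiguous, so that $p(x\mid\theta) = q(x\mid f(\theta))$ identically. Next I would marginalize out $\theta$ within the event $\{f(\theta)=y\}$: writing $p(x\mid f(\theta)=y) = \int p(x\mid\theta)\, p(\theta\mid f(\theta)=y)\, d\theta$ and noting that $p(\theta\mid f(\theta)=y)$ is supported on $\{\theta : f(\theta)=y\}$, on which $p(x\mid\theta) = q(x\mid y)$ does not depend on $\theta$, the integral collapses to $q(x\mid y)\int p(\theta\mid f(\theta)=y)\, d\theta = q(x\mid y)$. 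Combining the two displays gives $p(x\mid\theta) = q(x\mid f(\theta)) = p(x\mid f(\theta))$, which is the claim.

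The main obstacle is not conceptual but measure-theoretic bookkeeping: one must select versions of the conditional densities $p(x\mid\theta)$ and $p(\theta\mid f(\theta)=y)$ so that the pointwise identities above hold outside a common null set, and one must confirm that conditioning on $\{f(\theta)=y\}$ is harmless. Here the discreteness of $f(\theta)$ does the work — $\{f(\theta)=y\}$ has positive probability for every $y$ actually attained, so no disintegration of a continuous conditioning variable is needed, and the (at most countably many) degenerate values $y$ with $p(f(\theta)=y)=0$ can simply be discarded. I would carry out the argument entirely in the language of these conditional densities and remark that the finiteness/discreteness assumption on $f(\theta)$ is exactly what lets us sidestep any delicate regular-conditional-probability technicalities.
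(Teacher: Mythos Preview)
Your proposal is correct and follows essentially the same route as the paper's proof: both arguments fix a value $y$ with positive probability, use the hypothesis to see that $p(x\mid\theta)$ is constant on $\{\theta:f(\theta)=y\}$, and then integrate against $p(\theta\mid f(\theta)=y)$ to conclude $p(x\mid f(\theta))=p(x\mid\theta)$. The only cosmetic difference is that you package the common value as an auxiliary function $q(x\mid y)$, whereas the paper picks a representative $\theta'\in\{\theta:f(\theta)=y\}$ and writes $p(x\mid\theta')$ directly.
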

\begin{proof}
Let $Y = f(\theta)$.  Let $y$ be any value such that $p(y) > 0$, and define $A = \{\theta : f(\theta) = y\}$.
Then for all $\theta,\theta'\in A$, we have $p(x|\theta,y) = p(x|\theta) = p(x|\theta')$ by assumption. Hence,
\begin{align*}
    p(x |y) &= \int p(x |\theta,y)\, p(\theta|y) \lambda(d\theta) 
    = \int_A p(x |\theta,y)\, p(\theta|y) \lambda(d\theta) \\
    &= \int_A p(x |\theta')\, p(\theta|y) \lambda(d\theta) 
    = p(x|\theta') = p(x\mid\theta,y)
\end{align*}
where $\theta,\theta'\in A$, and $\lambda(d\theta)$ is the dominating measure for $p(\theta|y)$.
Thus, $p(x|y)p(\theta|y) = p(x|\theta,y)p(\theta|y) = p(x,\theta|y)$.
\end{proof}

\begin{proof}[Proof of Theorem~\ref{theorem:cond-indep}]
For notational brevity, denote $e=i'$ and $N=N'$, and define $f(G) = (f_e(G),f_{i_1}(G),\ldots,f_{i_K}(G))$.
Suppose we can show that for all $G_1$ and $G_2$, 
if $f_e(G_1) = f_e(G_2)$ then $X_{1:N}^{e} | G_1$ is equal in distribution to $X_{1:N}^{e} | G_2$.
Then the result will follow by Lemma~\ref{lemma:likelihood-equivalence}, since if $f(G_1) = f(G_2)$,
then in particular, $f_e(G_1)=f_e(G_2)$.

We show that if $f_e(G_1) = f_e(G_2)$ then $X_{1:N}^{e} | G_1 \overset{\mathrm{d}}{=} X_{1:N}^{e} | G_2$.
First observe that the assumed prior factors as $\pi(\beta|G) = \prod_{i=1}^V p(\beta_i|G) p(\beta_i^*)$, 
and therefore, since node $e$ has no parents in $G^{e}$,
\begin{align}\label{eqn:marginal-likelihood-equivalence-proof}
    p(X_{1:N}^{e} = x_{1:N} \mid G) = p(X_{1:N} = x_{1:N} \mid G^{e}) \frac{p(X_{e,1:N}^{e} = x_{e,1:N} \mid G)}{p(X_{e,1:N} = x_{e,1:N} \mid G^{e})}
\end{align}
where $x_{e,1:N}$ denotes $(x_{e,n} : n = 1,\ldots,N)$.
Since $e$ has no parents in $G^e$, $p(X_{e,1:N} = x_{e,1:N} \mid G^{e})$ does not depend on $G$.
Similarly, since $p(X_{e,1:N}^{e} = x_{e,1:N} \mid G) = \int \big(\prod_{n=1}^N p^*(x_{e,n} \mid \beta_e^*)\big) p(\beta_e^*) d\beta_e^*$,
this does not depend on $G$ either.

By Theorem 5 of \citet{Heckerman1995}, the BDeu metric is likelihood equivalent, which implies that
for any $G_1,G_2$ such that $f_e(G_1) = f_e(G_2)$, we have $p(X_{1:N} = x_{1:N} \mid G_1^{e}) = p(X_{1:N} = x_{1:N} \mid G_2^{e})$.
Therefore, applying these invariance properties to Equation~\ref{eqn:marginal-likelihood-equivalence-proof}, we see that 
if $f_e(G_1) = f_e(G_2)$ then $p(X_{1:N}^{e} = x_{1:N} \mid G_1) = p(X_{1:N}^{e} = x_{1:N} \mid G_2)$.
This completes the proof.
\end{proof}

\begin{proof}[Proof of Theorem~\ref{theorem:identifiable}]
For notational brevity, denote $e=i'$ and $N=N'$.
A distribution $P$ is said to be \textit{faithful} to a graph $G$ if the set of conditional independence relations that are true for $P$ are all and only those implied by $G$.  More precisely, given a distribution $P$ on $(X_1,\ldots,X_V)$, define $g(P) = (\mathds{1}(X_A\indep_P X_B\mid X_C) : A,B,C\subseteq \{1,\ldots,V\}\}$, that is, $g(P)$ is a binary vector indicating which conditional independence properties hold under $P$.  Meanwhile, given a DAG $G$ on $\{1,\ldots,V\}$, define $f(G) = (\mathds{1}(X_A\indep_G X_B\mid X_C) : A,B,C\subseteq \{1,\ldots,V\}\}$, that is, $f(G)$ is a binary vector indicating which conditional independence properties are implied by $G$ according to the d-separation criterion.  Then $P$ is faithful to $G$ if and only if $g(P) = f(G)$.

Let $B(G)$ be the support of the prior $\pi(\beta|G)$. Let $\lambda_G$ denote the dominating measure of $\pi(\beta|G)$ on $B(G)$.
(Colloquially, one might refer to $\lambda_G$ as ``Lebesgue measure on $B(G)$'', but technically there are sum-to-one constraints on the probability vectors, so technically it is Lebesgue measure on a lower-dimensional subspace.)
By Theorem 7 of \citet{Meek1995}, for any $G$, the set $\{\beta\in B(G) : P_{G,\beta} \text{ is not faithful to } G\}$ has measure zero under $\lambda_G$.  In particular,  $\{\beta^e\in B(G^e) : P_{G^e,\beta^e} \text{ is not faithful to } G^e\}$ has measure zero under $\lambda_{G^e}$.
Let $\pi^e$ denote the distribution of $(G^e,\beta^e)$ when $(G,\beta)\sim p(G,\beta \mid D)$.
Suppose we can show that $\pi^e(\beta^e | G^e)$ has a density with respect to $\lambda_{G^e}$.
Then it follows that, almost surely under $\pi^e$, $P_{G^e,\beta^e}$ is faithful to $G^e$.
In other words, $g(P_{G^e,\beta^e}) = f(G^e)$ almost surely when $(G,\beta)\sim p(G,\beta\mid D)$.
The conclusion of the theorem follows since, by construction, there is a one-to-one mapping between $f(G^e)$ and $f_e(G)$.

To complete the proof, we need to show that $\pi^e(\beta^e | G^e)$ has a density with respect to $\lambda_{G^e}$,
or in mathematical notation, $\pi^e(\beta^e | G^e) \ll \lambda_{G^e}$.
To see this, first observe that $\pi(\beta|G) \ll \lambda_G$, and thus, $p(\beta|G,D) \ll \lambda_G$.

Next, we argue that $p(\beta^e|G,D) \ll \lambda_{G^e}$. 
Recall that $\beta^e$ is a function of $\beta$ that is obtained by copying $\beta$ and then (i) putting $\beta_e^*$ in place of $\beta_e$, and (ii) putting $\beta_{e 1}$ in place of $\beta_e^*$.
Let $A^e \subseteq B(G^e)$ such that $\lambda_{G^e}(A^e) = 0$, 
and define $A = \{\beta\in B(G) : \beta^e \in A^e\}$.  Then $\lambda_G(A) = 0$, since $\lambda_G$ is the product of identical measures (colloquially, ``Lebesgue measure on the probability simplex'') for each $\beta_{i j}$ and each $\beta_i^*$.
Hence, $p(\beta^e\in A^e \mid G,D) = p(\beta\in A \mid G,D) = 0$.  This implies that $p(\beta^e|G,D) \ll \lambda_{G^e}$. 

Letting $H$ be a function of $G$ defined by $H = G^e$, we have
\begin{align*}
    p(\beta^e|G^e,D) = p(\beta^e|H,D) = \sum_G p(\beta^e|G,H,D)p(G|H,D) = \sum_{G\,:\,G^e=H} p(\beta^e|G,D)p(G|H,D),
\end{align*}
and thus, $p(\beta^e|G^e,D) \ll \lambda_{G^e}$.
Since $\pi^e(\beta^e|G^e)$ is just another way of writing $p(\beta^e|G^e,D)$, then $\pi^e(\beta^e|G^e) \ll \lambda_{G^e}$, as claimed.
\end{proof}

\section{Adjacency matrices across methods for Sachs network}\label{appendixC-adj-mat}

\begin{figure}
    \centering
    \begin{tabular}{cc}
     \includegraphics[scale = 0.55]{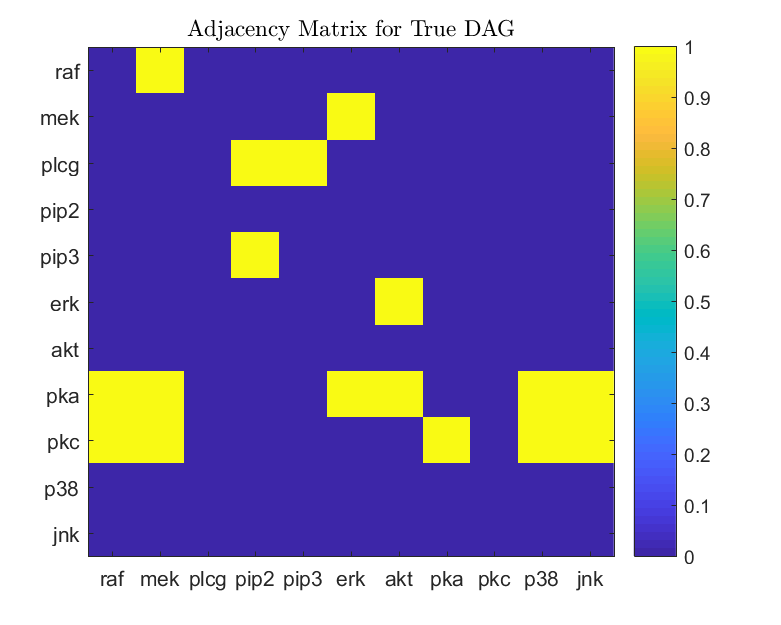}    & 
     \includegraphics[scale = 0.55]{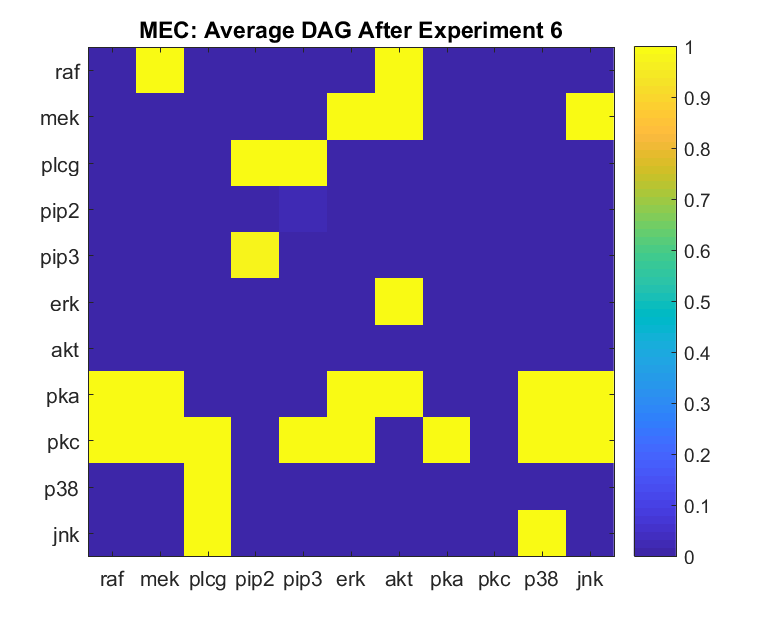} \\         \includegraphics[scale = 0.55]{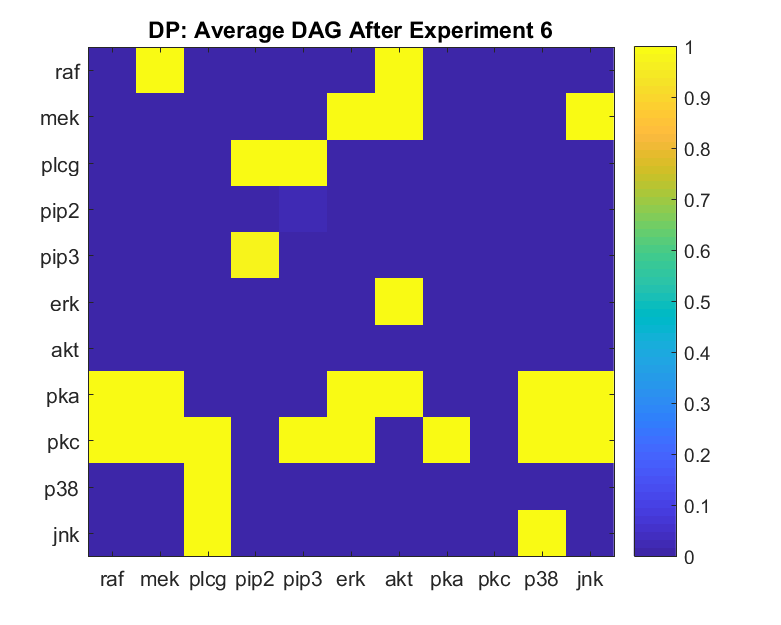}    & 
     \includegraphics[scale = 0.55]{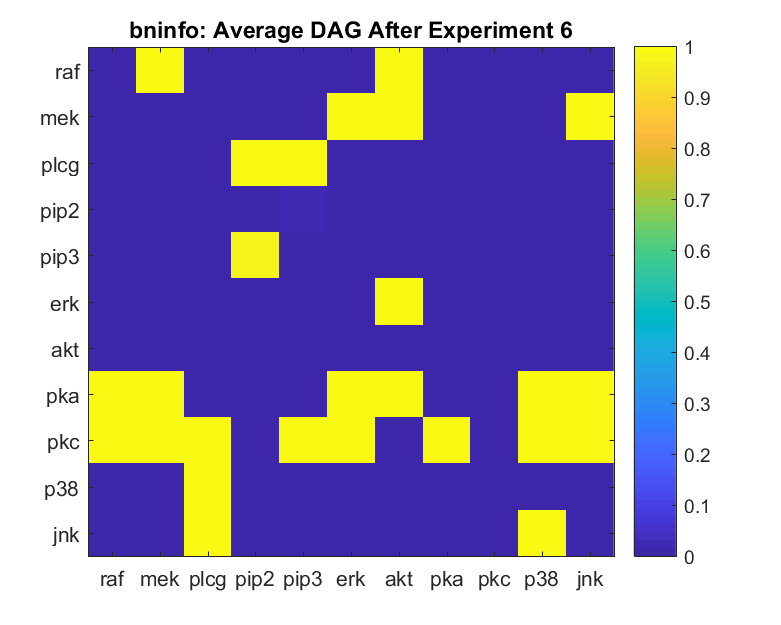} \\
    \end{tabular}
    \caption[Adjacency matrices after the sixth experiment for Sachs data]{Adjacency matrices after the sixth experiment for the Markov equivalence class (MEC), dynamic programming (DP), and bninfo methods on the Sachs data. Top left: benchmark DAG provided by \cite{Scutari}. Rows denote parent nodes and columns denote child nodes. Yellow indicates presence of a directed edge while blue indicates absence of an edge.}
    \label{app:fig-adj-mat}
\end{figure}
 The adjacency matrices in Figure~\ref{app:fig-adj-mat} for the MEC, DP, and bninfo methods all converge to the same DAG after experiment six. Note, however, that this estimated DAG differs from the structure of the benchmark DAG shown in the top left panel of Figure~\ref{app:fig-adj-mat}. The DAGs differ by a Hamming distance of nine, which upon further inspection is due to the estimated DAGs including the following false positive edges: 
    \begin{enumerate}
        \item raf $\rightarrow$ akt 
        \item mek $\rightarrow$ akt
        \item mek $\rightarrow$ jnk
        \item pkc $\rightarrow$ plcg
        \item pkc $\rightarrow$ pip3
        \item pkc $\rightarrow$ erk 
        \item p38 $\rightarrow$ plcg
        \item jnk $\rightarrow$ plcg
        \item jnk $\rightarrow$ p38
    \end{enumerate}
\end{appendix}

\newpage
\bibliographystyle{unsrt}
\bibliography{refs}

\begin{thebibliography}{40}
\newcommand{\enquote}[1]{``#1''}
\expandafter\ifx\csname natexlab\endcsname\relax\def\natexlab#1{#1}\fi
\expandafter\ifx\csname url\endcsname\relax
  \def\url#1{{\tt #1}}\fi
\expandafter\ifx\csname urlprefix\endcsname\relax\def\urlprefix{URL }\fi
\ifx\endbibitem\undefined \let\endbibitem\relax\fi

\bibitem[{Almudevar and Salzman(2005)}]{Almudevar}
Almudevar, A. and Salzman, P. (2005).
\newblock \enquote{Using a {B}ayesian posterior density in the design of
  perturbation experiments for network reconstruction.}
\newblock In {\em Proceedings of the IEEE Symposium on Computational
  Intelligence in Bioinformatics \& Computational Biology\/}, 1--7.
\endbibitem

\bibitem[{Andersson et~al.(1997)Andersson, Madigan, and
  Perlman}]{Andersson1997}
Andersson, S.~A., Madigan, D., and Perlman, M. (1997).
\newblock \enquote{A characterization of {M}arkov equivalence classes for
  acylic digraphs.}
\newblock {\em The Annals of Statistics\/}, 25(2): 505 -- 541.
\endbibitem

\bibitem[{Buntine(1991)}]{Buntine1991}
Buntine, W. (1991).
\newblock \enquote{Theory refinement on {B}ayesian networks.}
\newblock In Kaufman, M. (ed.), {\em Proceedings of Seventh Conference on
  Uncertainty in Artificial Intelligence\/}, 52--60.
\endbibitem

\bibitem[{Castelletti et~al.(2018)Castelletti, Consonni, Vedova, and
  Peluso}]{Castelletti2018}
Castelletti, F., Consonni, G., Vedova, M. L.~D., and Peluso, S. (2018).
\newblock \enquote{{Learning Markov equivalence classes of directed acyclic
  graphs: an objective Bayes approach}.}
\newblock {\em Bayesian Analysis\/}, 13(4).
\endbibitem

\bibitem[{Chickering(1996)}]{Chickering:UAI96}
Chickering, D. (1996).
\newblock \enquote{Learning equivalence classes of {B}ayesian network
  structures.}
\newblock In Horvitz, E. and Jensen, F. (eds.), {\em Proceedings of the 12th
  Conference on Uncertainty in Artificial Intelligence\/}, 150--157. Morgan
  Kaufman.
\endbibitem

\bibitem[{Cho et~al.(2016)Cho, Berger, and Peng}]{Cho}
Cho, H., Berger, B., and Peng, J. (2016).
\newblock \enquote{Reconstructing causal biological networks through active
  learning.}
\newblock {\em PLoS ONE\/}, 11(3): 1--15.
\endbibitem

\bibitem[{Cooper and Yoo(1999)}]{CooperYoo1999}
Cooper, G. and Yoo, C. (1999).
\newblock \enquote{{Causal discovery from a mixture of experimental and
  observational data}.}
\newblock In Horvitz, E. and Jensen, F. (eds.), {\em Proceedings of the
  Fifteenth Conference on Uncertainty in Artificial Intelligence\/}, 116--125.
\endbibitem

\bibitem[{Daly et~al.(2011)Daly, Shen, and Aitken}]{Daly2011}
Daly, R., Shen, Q., and Aitken, S. (2011).
\newblock \enquote{Learning {B}ayesian networks: Approaches and issues.}
\newblock {\em The Knowledge Engineering Review\/}, 26(2): 99--157.
\endbibitem

\bibitem[{Doob(1949)}]{Doob_1949}
Doob, J.~L. (1949).
\newblock \enquote{Application of the theory of martingales.}
\newblock In {\em Actes du Colloque International Le Calcul des
  Probabilit\'{e}s et ses applications (Lyon, 28 Juin -- 3 Juillet, 1948)\/},
  23--27. Paris CNRS.
\endbibitem

\bibitem[{Eaton and Murphy(2007{\natexlab{a}})}]{Eaton07_hybridMCMC}
Eaton, D. and Murphy, K. (2007{\natexlab{a}}).
\newblock \enquote{{Bayesian structure learning using dynamic programming and
  MCMC}.}
\newblock In {\em Proceedings of the Twenty-Third Conference on Uncertainty in
  Artificial Intelligence\/}, 101--108.
\endbibitem

\bibitem[{Eaton and Murphy(2007{\natexlab{b}})}]{Eaton07_DP}
--- (2007{\natexlab{b}}).
\newblock \enquote{{Exact Bayesian structure learning from uncertain
  interventions}.}
\newblock In {\em Proceedings of the Eleventh International Conference on
  Artificial Intelligence and Statistics\/}, 107--114.
\endbibitem

\bibitem[{Eberhardt(2008)}]{Eberhardt08}
Eberhardt, F. (2008).
\newblock \enquote{Almost optimal intervention sets for causal discovery.}
\newblock In {\em Uncertainty in Artifical Intelligence\/}, 161--168.
\endbibitem

\bibitem[{Ellis and Wong(2006)}]{Ellis2006}
Ellis, B. and Wong, W. (2006).
\newblock \enquote{{Sampling Bayesian Networks quickly}.}
\newblock In {\em Interface\/}.
\endbibitem

\bibitem[{Friedman and Koller(2003)}]{Friedman2003}
Friedman, N. and Koller, D. (2003).
\newblock \enquote{{Being Bayesian about network structure: a Bayesian approach
  to structure discovery in Bayesian networks}.}
\newblock {\em Machine Learning\/}, 50: 95--126.
\endbibitem

\bibitem[{Hauser and B\"{u}hlmann(2012{\natexlab{a}})}]{Hauser2012}
Hauser, A. and B\"{u}hlmann, P. (2012{\natexlab{a}}).
\newblock \enquote{Characterization and greedy learning of interventional
  {M}arkov equivalence classes of directed acyclic graphs.}
\newblock {\em Journal of Machine Learning Research\/}, 13: 2409--2464.
\endbibitem

\bibitem[{Hauser and B\"{u}hlmann(2012{\natexlab{b}})}]{Hauser2012b:EuroPGM}
--- (2012{\natexlab{b}}).
\newblock \enquote{Two optimal strategies for active learning of causal models
  from interventions.}
\newblock In {\em Sixth European Workshop on Probabilistic Graphical Models\/},
  123--130.
\endbibitem

\bibitem[{He and Geng(2008)}]{HeGeng2008}
He, Y.-B. and Geng, Z. (2008).
\newblock \enquote{Active learning of causal networks with intervention
  experiments and optimal designs.}
\newblock {\em Journal of Machine Learning Research\/}, 9: 2523--2547.
\endbibitem

\bibitem[{Heckerman et~al.(1995)Heckerman, Geiger, and
  Chickering}]{Heckerman1995}
Heckerman, D., Geiger, D., and Chickering, D. (1995).
\newblock \enquote{Learning {B}ayesian networks: The combination of knowledge
  and statistical data.}
\newblock {\em Machine Learning\/}, 20: 197 -- 243.
\endbibitem

\bibitem[{Koivisto(2006)}]{Koivisto2006}
Koivisto, M. (2006).
\newblock \enquote{Advances in exact {B}ayesian structure discovery in
  {B}ayesian networks.}
\newblock In Press, A. (ed.), {\em Proceedings of the Twenty-Second Conference
  on Uncertainty in Artificial Intelligence\/}, 241--248.
\endbibitem

\bibitem[{Koivisto and Sood(2004)}]{Koivisto04}
Koivisto, M. and Sood, K. (2004).
\newblock \enquote{{Exact Bayesian structure discovery in Bayesian networks}.}
\newblock {\em Journal of Machine Learning Research\/}, 5: 549--573.
\endbibitem

\bibitem[{Lauritzen and Spiegelhalter(1988)}]{Lauritzen1998}
Lauritzen, S. and Spiegelhalter, D. (1988).
\newblock \enquote{Local Computation with Probabilities on Graphical Structures
  and their Application to Expert Systems (with discussion).}
\newblock {\em Journal of the Royal Statistical Society: Series B\/}, 50(2):
  157--224.
\endbibitem

\bibitem[{Li and Leong(2009)}]{LiLeong}
Li, G. and Leong, T.-Y. (2009).
\newblock \enquote{Active learning for causal {B}ayesian network structure with
  non-symmetrical entropy.}
\newblock In {\em Proceedings of the 13th Pacific-Asia Conference on Advances
  in Knowledge Discovery and Data Mining\/}, 290--301. Springer-Verlag.
\endbibitem

\bibitem[{Madigan et~al.(1995)Madigan, York, and Allard}]{Madigan1995}
Madigan, D., York, J., and Allard, D. (1995).
\newblock \enquote{{B}ayesian Graphical Models for Discrete Data.}
\newblock {\em International Statistical Review\/}, 63(2): 215--232.
\endbibitem

\bibitem[{Meek(1995)}]{Meek1995}
Meek, C. (1995).
\newblock \enquote{{Strong completeness and faithfulness in Bayesian
  networks}.}
\newblock In {\em Proceedings of the Eleventh Conference on Uncertainty in
  Artificial Intelligence\/}, 411--418.
\endbibitem

\bibitem[{Miller(2018)}]{miller2018detailed}
Miller, J.~W. (2018).
\newblock \enquote{A detailed treatment of Doob's theorem.}
\newblock {\em arXiv preprint arXiv:1801.03122\/}.
\endbibitem

\bibitem[{Mooij and Heskes(2013)}]{Mooij2013}
Mooij, J.~M. and Heskes, T. (2013).
\newblock \enquote{Cyclic causal discovery from continuous equilibrium data.}
\newblock In {\em Proceedings of the 29th Conference on Uncertainty in
  Artificial Intelligence\/}.
\endbibitem

\bibitem[{Murphy(2001)}]{Murphy2001}
Murphy, K.~P. (2001).
\newblock \enquote{Active Learning of Causal {B}ayes Net Structure.}
\newblock Technical report, Department of Computer Science, University of
  California, Berkeley.
\endbibitem

\bibitem[{Nagarajan et~al.(2013)Nagarajan, Scutari, and Lebre}]{Nagarajan2013}
Nagarajan, M., Scutari, M., and Lebre, S. (2013).
\newblock {\em {B}ayesian Networks in R with Applications in Systems
  Biology\/}.
\newblock USA: Springer.
\endbibitem

\bibitem[{Ness et~al.(2018)Ness, , Sachs, Mallick, and Vitek}]{Ness}
Ness, R.~O., , Sachs, K., Mallick, P., and Vitek, O. (2018).
\newblock \enquote{A {B}ayesian active learning experimental design for
  inferring signaling networks.}
\newblock {\em Journal of Computational Biology\/}, 25(7): 709--725.
\endbibitem

\bibitem[{Pearl(1988)}]{Pearl88}
Pearl, J. (1988).
\newblock {\em Probabilistic Reasoning in Intelligent Systems\/}.
\newblock San Mateo: Morgan and Kaufman.
\endbibitem

\bibitem[{Pearl(2000)}]{Pearl}
--- (2000).
\newblock {\em Causality: models, reasoning, and inference\/}.
\newblock New York: Cambridge University Press.
\endbibitem

\bibitem[{Peters et~al.(2011)Peters, Mooij, Janzing, and
  Scholkopf}]{Peters2011}
Peters, J., Mooij, J., Janzing, D., and Scholkopf, B. (2011).
\newblock \enquote{Identifiability of causal graphs using functional models.}
\newblock In {\em Proceedings of the 27th Conference on Uncertainty in
  Artificial Intelligence\/}, 589--598. UAI 2011.
\endbibitem

\bibitem[{Peterson et~al.(2015)Peterson, Stingo, and Vannucci}]{Peterson2015}
Peterson, C., Stingo, F.~C., and Vannucci, M. (2015).
\newblock \enquote{{Bayesian inference of multiple Gaussian graphical models}.}
\newblock {\em Journal of the American Statistical Association\/}, 110.
\endbibitem

\bibitem[{Pournara and Wernisch(2004)}]{Pournara}
Pournara, I. and Wernisch, L. (2004).
\newblock \enquote{Reconstruction of gene networks using {B}ayesian learning
  and manipulation experiments.}
\newblock {\em Bioinformatics\/}, 20(17): 2934--2942.
\endbibitem

\bibitem[{Sachs et~al.(2005)Sachs, Perez, Pe'er, Lauffenburger, and
  Nolan}]{Sachs05}
Sachs, K., Perez, O., Pe'er, D., Lauffenburger, D., and Nolan, G. (2005).
\newblock \enquote{Causal Protein-Signaling Networks Derived from
  Multiparameter Single-Cell Data.}
\newblock {\em Science\/}, 308.
\endbibitem

\bibitem[{Scutari(2010)}]{Scutari}
Scutari, M. (2010).
\newblock \enquote{Learning {B}ayesian Networks with the {bnlearn} {R}
  Package.}
\newblock {\em Journal of Statistical Software\/}, 35(3): 1--22.
\endbibitem

\bibitem[{Spirtes et~al.(2001)Spirtes, Glymour, and Scheines}]{Spirtes2001}
Spirtes, P., Glymour, C., and Scheines, R. (2001).
\newblock {\em Causation, Prediction, and Search\/}.
\newblock MIT Press.
\endbibitem

\bibitem[{Sverchkov and Craven(2017)}]{Sverchkov}
Sverchkov, Y. and Craven, M. (2017).
\newblock \enquote{A review of active learning approaches to experimental
  design for uncovering biological networks.}
\newblock {\em PLoS Computational Biology\/}, 13(6): 1--26.
\endbibitem

\bibitem[{Tong and Koller(2001)}]{TongKoller}
Tong, S. and Koller, D. (2001).
\newblock \enquote{Active learning for structure in {B}ayesian networks.}
\newblock In {\em International Joint Conference on Artificial Intelligence\/},
  863--869.
\endbibitem

\bibitem[{Verma and Pearl(1991)}]{Verma91}
Verma, T. and Pearl, J. (1991).
\newblock \enquote{Equivalence and Synthesis in Causal Models.}
\newblock {\em Uncertainty in Artificial Intelligence\/}, 6.
\endbibitem

\end{thebibliography}

\end{document}